\newcommand{\ptas}{$\mathsf{PTAS}$ }
\newcommand{\wit}{\mathsf{w}}
\newcommand{\pat}{\mathsf{L}}
\newcommand{\cor}{\texttt{cor}}
\newcommand{\dis}{\texttt{dist}}
\newcommand{\optX}{\mathsf{OPT_X}}
\newcommand{\optY}{\mathsf{OPT_Y}}
\newcommand{\opt}{\mathsf{OPT}}
\newcommand{\apx}{$\mathsf{APX}$}
\newcommand{\elf}{$\mathsf{L}$}
\newcommand{\np}{$\mathsf{NP}$}
\newcommand{\p}{$\mathsf{P}$}
\newcommand{\mds}{$\mathsf{MDS}$}
\newtheorem{obs}[theorem]{Observation}
\title{Approximating Dominating Set on Intersection Graphs of Rectangles and L-frames}
\titlerunning{Approximating Dominating Set on Rectangles and L-frames} 
\author{Sayan Bandyapadhyay}{Department of Computer Science, University of Iowa, Iowa City, USA.}{sayan-bandyapadhyay@uiowa.edu}{}{The work was partially done when the author was visiting University of California, Santa Barbara}
\author{Anil Maheshwari}{School of Computer Science, Carleton University, Ottawa, Canada.}{anil@scs.carleton.ca}{}{}
\author{Saeed Mehrabi}{School of Computer Science, Carleton University, Ottawa, Canada.}{saeed.mehrabi@carleton.ca}{}{}
\author{Subhash Suri}{Department of Computer Science, UC Santa Barbara, California, USA.}{suri@cs.ucsb.edu}{}{}
\authorrunning{S. Bandyapadhyay, A. Maheshwari, S. Mehrabi, and S. Suri} 
\subjclass{F.2.0 Algorithms, I.3.5: Computational Geometry}
\keywords{Minimum dominating set; Rectangles and \elf-frames; Approximation schemes; Local search; \apx-hardness.}
\begin{document}

\maketitle

\begin{abstract}
We consider the \emph{Minimum Dominating Set} (\mds) problem on the intersection graphs of geometric objects. Even for simple and widely-used geometric objects such as rectangles, no sub-logarithmic approximation is known for the problem and (perhaps surprisingly) the problem is \np-hard even when all the rectangles are ``anchored'' at a diagonal line with slope -1 (Pandit, CCCG 2017). In this paper, we first show that for any $\epsilon>0$, there exists a $(2+\epsilon)$-approximation algorithm for the \mds~problem on ``diagonal-anchored'' rectangles, providing the first $O(1)$-approximation for the problem on a non-trivial subclass of rectangles. It is not hard to see that the \mds~problem on ``diagonal-anchored'' rectangles is the same as the \mds~problem on ``diagonal-anchored'' \emph{\elf-frames}: the union of a vertical and a horizontal line segment that share an endpoint. As such, we also obtain a $(2+\epsilon)$-approximation for the problem with ``diagonal-anchored'' {\elf-frames}. 
%
On the other hand, we show that the problem is \apx-hard in case the input {\elf-frames} intersect the diagonal, 
or the horizontal segments of the \elf-frames intersect a vertical line. 
However, as we show, the problem is linear-time solvable in case the \elf-frames intersect a vertical as well as a horizontal line. Finally, 
we consider the \mds~problem in the so-called ``edge intersection model'' and obtain a number of results, answering two questions posed by Mehrabi (WAOA 2017).
\end{abstract}

\section{Introduction}
\label{sec:introduction}
\emph{Minimum Dominating Set} (\mds) is an \np-hard problem in graph theory and discrete optimization. Given a graph $G=(V,E)$, the objective of the \mds~problem is to compute a minimum-size subset $V'\subseteq V$ such that every vertex not in $V'$ is adjacent to at least one vertex in $V'$. For general graphs, it is known that a greedy algorithm for \mds~achieves an $O(\log |V|)$-factor approximation and within a constant factor this is the best one can hope for, unless \p=\np~\cite{RazS97}. As such, the problem has been extensively studied on many subclasses of graphs, one of which is the intersection graphs of geometric objects in the plane~\cite{DeL16,Marx06,GibsonP10,ErlebachL08,Mehrabi17,Pandit17}. Here, each vertex of the graph is in one-to-one correspondence with a geometric object in the plane and two vertices are adjacent if and only if the corresponding objects have a non-empty intersection.

In this paper, we consider the approximability and hardness of the \mds~problem on the intersection graphs of geometric objects. The \mds~problem is known to admit $\mathsf{PTAS}$es on disk graphs~\cite{GibsonP10} and the intersection graphs of non-piercing\footnote{Two connected objects $A$ and $B$ are called non-piercing if both $A\setminus B$ and $B\setminus A$ are connected.} objects~\cite{GovindarajanRRR16}. On the other hand, it is \np-hard to obtain a $o(\log |V|)$-approximation in polynomial time for sufficiently complicated shapes, e.g. rectilinear polygons~\cite{DinurS14,ErlebachL08}.
However, even for simple shapes such as axis-parallel rectangles no sub-logarithmic approximation is known. The only approximation for rectangles we are aware of is due to Erlebach et al.~\cite{ErlebachL08} who gave an $O(c^3)$-approximation on rectangles with aspect-ratio at most $c$. In fact, the problem is \apx-hard~\cite{ErlebachL08} on rectangles, and (perhaps surprisingly) the problem is shown to be \np-hard even on diagonal-anchored rectangles~\cite{Pandit17}; that is, the intersection of every rectangle and a diagonal line with slope -1 is exactly one corner of the rectangle. See Figure~\ref{fig:graphs}(a) for an example. However, to the best of our knowledge no sub-logarithmic approximation is known even in this case. We note that optimization problems on ``diagonal-intersecting'' geometric objects have been studied before through the lenses of approximation algorithms; e.g. maximum independent set~\cite{CorreaFPS15,CatanzaroCFHHHS17,KeilMPV17} and minimum hitting set~\cite{CorreaFPS15,ChepoiF13,MudgalP15}.

\subparagraph{Our results.} In this paper, we first give a $(2+\epsilon)$-approximation algorithm for the \mds~problem on diagonal-anchored rectangles, providing the first $O(1)$-approximation for the problem on a non-trivial subclass of rectangles.
\begin{theorem}
\label{thm:twoPlusEpsilon}
For any $\epsilon>0$, there exists a $(2+\epsilon)$-approximation algorithm for the \mds~problem on diagonal-anchored rectangles.
\end{theorem}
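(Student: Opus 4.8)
\emph{Proof proposal.} The plan is to split the two-sided instance into two one-sided domination problems, solve each with a local-search \ptas, and pay a factor of two for combining the two solutions. Since the anchoring line has slope $-1$ and meets every rectangle in a single corner, each rectangle lies entirely in one of the two closed half-planes bounded by that line: those anchored at their bottom-left corner (lying above the line), which I call the family $\mathcal{A}$, and those anchored at their top-right corner (lying below), which I call $\mathcal{B}$. First I would record the elementary fact that if $D_{\mathcal{A}}$ dominates every rectangle of $\mathcal{A}$ and $D_{\mathcal{B}}$ dominates every rectangle of $\mathcal{B}$, then $D_{\mathcal{A}}\cup D_{\mathcal{B}}$ dominates the whole instance, while any dominating set of the whole instance is in particular feasible for each of the two one-sided problems. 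Hence, writing $\optX$ and $\optY$ for the optima of the two one-sided problems, we have $\len{\optX},\len{\optY}\le\len{\opt}$, so a $(1+\epsilon)$-approximation to each yields a set of size at most $2(1+\epsilon)\len{\opt}$; rescaling $\epsilon$ gives the claimed bound.

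For concreteness, consider the one-sided problem of dominating all of $\mathcal{A}$, where dominators may be chosen from $\mathcal{A}\cup\mathcal{B}$. I would run the standard geometric local search: start from any feasible dominating set and, as long as some exchange that deletes at most $1/\epsilon^2$ objects and inserts a strictly smaller set preserves feasibility, perform it; this terminates in polynomial time (for fixed $\epsilon$) at a locally optimal solution $L$. Letting $O$ be a one-sided optimum, so $\len{O}=\len{\optX}$, to prove $\len{L}\le(1+\epsilon)\len{\optX}$ I would invoke the local-search analysis of Gibson--Pirwani and of Govindarajan et al.\ \cite{GibsonP10,GovindarajanRRR16}: it suffices to verify the locality condition, namely that there is a planar bipartite graph on $L\cup O$ in which every rectangle $a\in\mathcal{A}$ has two neighbours $\ell\in L$ and $o\in O$ that are adjacent to each other and both intersect (dominate) $a$.

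To build this graph I would first pass to \elf-frames: replacing each anchored rectangle by the two edges incident to its anchor corner preserves all pairwise intersections among diagonal-anchored objects (the equivalence already noted in the introduction), so each object becomes a monotone staircase and the dominatees of $\mathcal{A}$ become curves whose anchor corners lie on, and are totally ordered along, the diagonal. Using this order I would choose, for each dominatee $a$, one dominator from $L$ and one from $O$, and route the connecting edge through a neighbourhood of the diagonal; the monotonicity of the staircases together with the linear order of the anchors is what should prevent two such edges from crossing.

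The hard part will be making this planarity argument rigorous. Unlike disks and non-piercing regions, diagonal-anchored rectangles can pierce one another (a tall thin anchored rectangle can cross a wide short one so that their set difference is disconnected), so the non-piercing \ptas of \cite{GovindarajanRRR16} does not apply as a black box; the planar support must be produced directly from the anchored, monotone structure, and the argument has to cope with the fact that, although all dominatees lie on one side of the diagonal, their dominators may lie on either side. I expect this geometric lemma to be the entire technical core, after which the local-search bound and the factor-two combination follow the standard template. Finally, it is worth noting that one cannot simply merge the two one-sided graphs into a single planar graph to obtain a \ptas: the union of two planar graphs need not be planar, nor even admit sublinear separators, which is precisely why the method yields $2+\epsilon$ rather than $1+\epsilon$.
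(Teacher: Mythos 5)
Your high-level framework is exactly the paper's: partition the rectangles into the family anchored from above and the family anchored from below, run a local-search \ptas on each side, and combine via $\len{\optX},\len{\optY}\le\len{\opt}$ to get a $2(1+\epsilon')$ bound. One remark on the combination step: you make $\len{\optX}\le\len{\opt}$ trivial by allowing dominators from both sides in the one-sided problem, whereas the paper restricts dominators to one side and observes that a frame from the opposite side can dominate a one-sided dominatee only if the two share an anchor point on the diagonal, so it can be swapped for a same-side frame. This same observation is unavoidable in your formulation too --- it is what lets the \ptas ignore opposite-side dominators --- and you flag the issue (``dominators may lie on either side'') without resolving it.

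The genuine gap is the planarity of the exchange graph, which is the entire technical core of the paper's proof and which you explicitly leave open. Your sketch --- choose for each dominatee one dominator from $L$ and one from $O$ and route the edge near the diagonal, with ``monotonicity of the staircases'' preventing crossings --- does not work as stated, because if the pair for each witness is chosen arbitrarily, crossings are unavoidable: two witnesses whose chosen pairs have anchors interleaved along the diagonal (say at positions $1<2<3<4$ with one pair at $\{1,3\}$ and the other at $\{2,4\}$) produce crossing arcs, and monotonicity alone does not forbid this configuration. The paper's proof hinges on a specific selection rule: for each witness $u$, connect the pair $(b,r)\in\mathcal{B}\times\mathcal{R}$ of dominators of $u$ whose corners are at \emph{minimum Euclidean distance} among all such pairs. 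Every one of the paper's five non-crossing lemmas (top--top, down--down, mixed--top, mixed--down, mixed--mixed) derives its contradiction precisely from this rule: an interleaving forces some frame of one pair to intersect the witness of the other pair, so a closer pair should have been connected for that witness instead. In addition, ``routing near the diagonal'' does not account for the case where the two chosen dominators intersect their witness from different sides (one from the left, one from below); the paper must classify arcs as top, down, or mixed, draw top and down arcs as half-circles on the appropriate side of the diagonal, and route mixed arcs across the diagonal through the witness's corner for the case analysis to close. Without the selection rule and this classification, the locality condition required by the local-search theorem has no proof, and the rest of your argument --- the local-search bound and the factor-two combination, which are indeed the easy parts --- has nothing to stand on.
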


To prove Theorem~\ref{thm:twoPlusEpsilon}, we first divide the problem into two subproblems and then give a \ptas for the subproblems using the \emph{local search technique}~\cite{ChanH12,MustafaR10}. Each such subproblem involves diagonal-anchored rectangles that lie on only one side of the diagonal. The key to obtain our \ptas is in showing a planar drawing of a bipartite graph that is required for the analysis of the local search algorithm. We note that, even in these simpler cases the problem remains sufficiently challenging due to the geometry of the rectangles, and the existing schemes are not useful to obtain a near-optimal solution. For example, the local search analyses for non-piercing objects in \cite{GovindarajanRRR16} do not hold here, as the diagonal-anchored rectangles can still ``pierce'' each other. 

\begin{figure}[!h]
\centering
\includegraphics[width=0.50\textwidth]{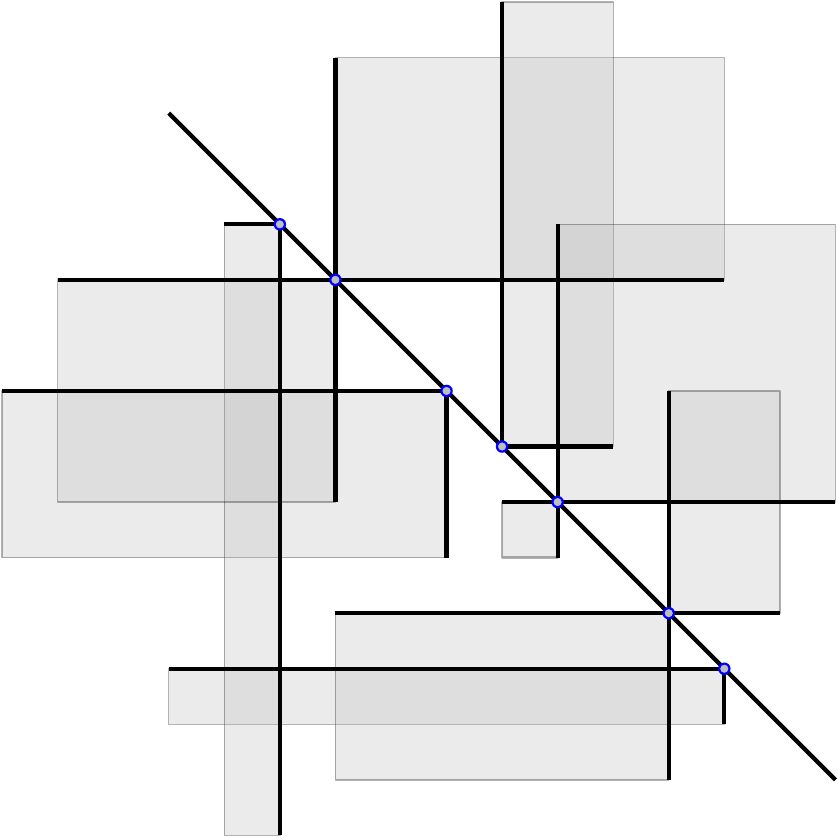}
\caption{The class of diagonal-anchored rectangles is equivalent to that of diagonal-anchored \elf-frames.}
\label{fig:equivalent}
\end{figure}

It is not hard to see that the \mds~problem on ``diagonal-anchored'' rectangles is the same as the \mds~problem on ``diagonal-anchored'' \emph{\elf-frames}~\cite{CatanzaroCFHHHS17}. An \elf-frame is the union of a vertical and a horizontal line segment that share an endpoint (\emph{corner}). Indeed, each rectangle in the input can be replaced by a diagonal-anchored {\elf-frame} without altering the underlying intersection graph (see Figure~\ref{fig:equivalent} for an illustration). Thus, a dominating set for the instance with the {\elf-frames} is also a dominating set for the original instance with rectangles, and vice versa. Hence, we also obtain a $(2+\epsilon)$-approximation for the problem with diagonal-anchored {\elf-frames}.
For the \mds~problem on general \elf-frames, the only approximation we are aware of is due to Mehrabi~\cite{Mehrabi17} who gave an $O(1)$-approximation algorithm when every two \elf-frames intersect in at most one point. Asinowski et al.~\cite{AsinowskiCGLLS12} proved that every circle graph is an intersection graph of \elf-frames. Since \mds~is \apx-hard on circle graphs~\cite{DamianP06}, the problem is also \apx-hard on \elf-frames.

Note that, by definition, we can have four different \emph{types} of \elf-frames depending on the two endpoints that define the corner. Considering the problem on \elf-frames we extend the \apx-hardness result in the general case to two constrained cases. First, we show that the problem does not admit a $(1+\epsilon)$-approximation for any $\epsilon > 0$ on ``diagonal-intersecting'' \elf-frames (see Figure~\ref{fig:graphs}(b) for an example).
\begin{theorem}
\label{thm:apxDiagonal}
The \mds~problem is \apx-hard on \elf-frames when every \elf-frame intersects a diagonal line. 
\end{theorem}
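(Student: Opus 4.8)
The plan is to prove APX-hardness via a reduction from a known APX-hard problem. Since the paper already notes that MDS is APX-hard on circle graphs (Damian-Pinciu) and, more immediately, that MDS is APX-hard on general \elf-frames (via the circle-graph embedding of Asinowski et al.), the natural strategy is to take an arbitrary instance of MDS on \elf-frames and show it can be realized — while preserving approximation ratios — by an instance in which every \elf-frame crosses a common diagonal line. In other words, I would give an approximation-preserving (L-reduction, or at least a gap-preserving) reduction from the unconstrained \elf-frame MDS problem to the diagonal-intersecting variant.

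Let me set up the geometry first. Given an arbitrary collection of \elf-frames, the obstacle is that nothing forces them to meet a single diagonal line. The key idea I would pursue is a \emph{stretching / placement gadget}: take the diagonal line $D$ of slope $-1$ and, for each original \elf-frame, attach or reroute it so that it is forced to cross $D$ without changing which pairs of \elf-frames intersect. Concretely, I would try to lay out the frames in a band around $D$ so that each frame's horizontal and vertical segments can be lengthened to reach $D$ while keeping all original intersections intact and introducing no spurious ones. Because an \elf-frame has four types depending on corner orientation, I have freedom to choose, for each frame, the type and the side of $D$ on which its corner sits, which should give enough flexibility to push a segment across $D$.

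The key steps, in order, would be: (i) start from the APX-hardness of MDS on general \elf-frames quoted in the excerpt (the circle-graph route) and extract a clean source instance with a known optimum/gap; (ii) describe the geometric transformation that places every frame to intersect the fixed diagonal $D$, being careful that the transformation is computable in polynomial time; (iii) prove the \emph{soundness of the embedding} — that the intersection graph is exactly preserved, i.e. two transformed frames intersect if and only if the originals did, so the dominating sets correspond bijectively and $\opt$ is unchanged; and (iv) conclude that since the reduction preserves optima exactly (or up to a constant), the known inapproximability gap for general \elf-frames transfers, yielding that no $(1+\epsilon)$-approximation exists for the diagonal-intersecting case unless \p=\np.

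I expect the main obstacle to be step (iii): guaranteeing that forcing every frame to touch the diagonal does not \emph{create} new intersections among frames that were originally disjoint, nor \emph{destroy} existing ones. Extending a segment to reach $D$ risks having it cross unrelated frames along the way, so the hard part is arranging the frames (possibly via a careful scaling so that distinct frames occupy well-separated ``tracks,'' together with a consistent choice of frame type and diagonal side) so that the extensions stay within private corridors. If a direct geometric embedding proves too delicate, the fallback is to not reduce from the general case but instead redo the circle-graph-to-\elf-frame construction of Asinowski et al.\ from scratch while enforcing diagonal-intersection as an invariant throughout the construction, which is likely the cleaner route and the one I would ultimately favor.
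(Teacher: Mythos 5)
Your ultimately-favored fallback is exactly the paper's proof: the paper gives a gap-preserving reduction from \mds~on circle graphs (\apx-hard~\cite{DamianP06}) by cutting the circle at a point, mapping the $2n$ chord endpoints in circular order onto the diagonal $y=(2n+1)-x$, and drawing each chord as the unique \elf-frame lying below the diagonal that joins its two mapped endpoints, so that endpoint interleaving --- and hence the whole intersection graph and $\opt$ --- is preserved exactly; this is precisely ``redoing the Asinowski et al.~\cite{AsinowskiCGLLS12} construction with diagonal-intersection as an invariant.'' You were also right to distrust your primary route: rerouting an arbitrary \elf-frame instance so that every frame crosses a common diagonal without creating or destroying intersections is the step that would fail (it would amount to proving the two graph classes coincide), and it is unnecessary once one reduces directly from circle graphs.
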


As the construction in proving Theorem~\ref{thm:apxDiagonal} shows, the theorem holds even if the input consists of only one type of \elf-frames and all intersection points of the \elf-frames lie on only one side of the diagonal. This is in contrast to the diagonal-anchored {\elf-frames} case where we obtain a \ptas. We also show that one cannot hope for a $(1+\epsilon)$-approximation for any $\epsilon > 0$ even when all the \elf-frames intersect a vertical line; see Figure~\ref{fig:graphs}(c) for an example. We refer to these \elf-frames as \emph{vertical-intersecting} \elf-frames.
\begin{theorem}
\label{thm:apxVertical}
The \mds~problem is \apx-hard on {vertical-intersecting} \elf-frames even if all the \elf-frames intersect the vertical line from one side. Moreover, the problem is \np-hard even if for each \elf-frame, the horizontal and vertical segments have the same length.
\end{theorem}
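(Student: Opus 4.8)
The plan is to give an approximation-preserving reduction from an \apx-hard covering problem, realising it with \elf-frames laid out along the vertical line $\ell$. Concretely, I would reduce from Minimum Vertex Cover on cubic graphs, which is \apx-hard, viewed as the Set Cover instance whose elements are the edges of the input graph $H$ and whose sets are the (bounded-size) vertex neighbourhoods; this is the same as \mds on a bounded-degree graph. Every frame in the constructed instance will cross $\ell$ with its corner to the left and its horizontal leg protruding to the right, as in Figure~\ref{fig:graphs}(c), so the ``from one side'' property holds by construction and never has to be argued separately.

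For each set (vertex of $H$) I introduce a \emph{set-frame}, and for each element (edge of $H$) an \emph{element-frame}. An element-frame should be dominated exactly when one of the set-frames containing it is selected, which I encode by letting the vertical leg of a set-frame cross the horizontal legs of its elements: with the corners ordered along the $x$-axis and the heights chosen carefully, frame $S$ meets element-frame $e$ iff $e$'s corner lies to the left of $S$'s and $e$'s height falls within $S$'s vertical span. To stop element-frames from being selected in place of sets, and to make all set-frames pairwise dominate one another (so that a single selected set-frame already dominates the entire ``set side''), I would add a small collection of forcing frames that turn the set side into a clique and penalise choosing an element-frame. A minimum dominating set can then be assumed to consist of set-frames whose elements cover everything, i.e.\ a minimum set cover of the source instance.

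The main obstacle I anticipate is that a single \elf-frame whose legs emanate to one side of $\ell$ can only realise an ``interval/dominance''-type crossing pattern: its downward (or upward) leg meets precisely those element legs whose heights form a contiguous range, so a naive layout expresses only set systems with the consecutive-ones property, which is far too weak to inherit \apx-hardness. I would overcome this by representing each element by a short \emph{chain} of collinear element-frames and routing each set--element incidence through a constant-size gadget that crosses only the intended link of the chain, so that arbitrary bounded incidences become expressible while the crossing pattern stays controlled; verifying that these gadgets create no domination relations other than the intended ones is the crux of the correctness proof. Granting this, the map is an L-reduction: the geometric optimum equals the source optimum up to a constant factor (the gadget and chain frames number $\Theta(n)$, and on bounded-degree instances the source optimum is also $\Theta(n)$, where $n$ is the number of vertices of $H$), and any dominating set is converted back with only a constant loss, so a $(1+\epsilon)$-approximation for the \elf-frame instance would approximate cubic Vertex Cover within $1+\epsilon'$, a contradiction.

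For the final claim I would retain only \np-hardness while imposing that each frame's two legs have equal length. This couples the horizontal reach past $\ell$ to the vertical span and removes the freedom used above to tune which element heights a set-frame covers. I would handle it by placing all corners and endpoints on a fine integer grid whose resolution depends on the instance size, scaling every gadget so that each intended crossing occurs with a uniform slack while every unintended crossing is separated by more than the common leg length, and only then fixing that single length so that every frame becomes square-legged. Checking that this simultaneous choice of grid spacing, heights, and leg length preserves exactly the intended intersections---now without the approximation guarantees, so that only \np-hardness is claimed---completes the proof.
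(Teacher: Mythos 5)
There is a genuine gap, and it sits exactly where you flag it yourself: the ``chains of collinear element-frames and constant-size gadgets'' that are supposed to overcome the consecutive-ones obstruction are never constructed, and there are concrete geometric reasons to doubt they can be. In this class every frame must run its horizontal leg all the way from its corner to the vertical line $\ell$, so there is no such thing as a \emph{local} gadget: two frames with corners $(a_i,b_i)$, $(a_j,b_j)$ and vertical tops $t_i,t_j$ intersect iff ($a_j\le a_i$ and $b_j\in[b_i,t_i]$) or ($a_i\le a_j$ and $b_i\in[b_j,t_j]$), a dominance-type pattern. To make your set-frames a clique you need staircase corners with tall vertical legs, and then any element- or gadget-frame placed at height $y$ with corner at $x=a$ unavoidably crosses \emph{every} set-frame whose vertical leg lies between $a$ and $\ell$ and spans $y$ --- i.e.\ it intersects a suffix of the set-frames, not just the two you intend. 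Your chains only subdivide incidences; they do not change the fact that each new frame again sweeps the whole strip up to $\ell$. Realizing an arbitrary bounded-degree (non-planar) incidence structure inside this class, with no unintended adjacencies, is not a routine verification --- the class of such intersection graphs is a restricted one closely tied to circle graphs, and establishing your gadgets exist would itself be a structural theorem. Without it the L-reduction has no instance to run on.

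The paper sidesteps gadget-building entirely: it takes a problem already known to be \apx-hard, \mds~on \emph{circle graphs}, and embeds any circle-graph representation \emph{exactly} into one-sided vertical-intersecting \elf-frames --- map the $2n$ chord endpoints to a quarter-circle arc, replace each chord by the unique \elf-frame inside the circle joining its two mapped points, and extend every horizontal leg rightwards until it crosses the line $x=1$. Chords cross iff the corresponding \elf-frames intersect, so the intersection graph (hence the optimum) is preserved verbatim and the reduction is trivially gap-preserving; this is the same mechanism as the paper's Theorem~2, with the diagonal replaced by an arc. For the second claim (equal-length legs), the paper does not try to retrofit the first construction; it gives a separate \np-hardness reduction from Planar Monotone Rectilinear 3SAT with variable, clause, and auxiliary \elf-frames built to have equal legs, then rotates by $90^{\circ}$. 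Your ``fine grid plus a final choice of a common leg length'' cannot work as described: equal legs rigidly couple each frame's vertical span to its horizontal distance from $\ell$, and uniform rescaling preserves all such ratios, so one cannot separate unintended crossings ``by more than the leg length'' after the fact --- and in any case this step presupposes the part-one gadgets that were never built.
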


Moreover, we show that the \apx-hardness of Theorem~\ref{thm:apxVertical} is almost tight in the sense that the problem admits a polynomial-time algorithm on \elf-frames, where each \elf-frame intersects a vertical line and a horizontal line. See Figure~\ref{fig:graphs}(d) for an example. Note that, all the \elf-frames in the input are of the same type. 
\begin{theorem}
\label{thm:xyCrossingAlgorithm}
The \mds~problem is linear-time solvable on \elf-frames that intersect a vertical line and a horizontal line.
\end{theorem}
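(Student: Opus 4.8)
The plan is to show that the intersection graph of the input \elf-frames is a \emph{permutation graph}, and then to solve \mds~on it using the known machinery for permutation graphs. Fix coordinates so that the vertical line is $x=0$ and the horizontal line is $y=0$, and recall that all the \elf-frames are of the same type; say each consists of a horizontal arm extending left from its corner and a vertical arm extending down from its corner (the other three types are symmetric, obtained by reflecting the axes). Since the horizontal arm of a frame is parallel to the vertical line, the only way it can meet $x=0$ is by crossing it, which forces the corner to satisfy $x\ge 0$ and the arm to reach leftward past the $y$-axis; symmetrically, the vertical arm crossing $y=0$ forces the corner to satisfy $y\ge 0$ and the arm to reach downward past the $x$-axis. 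Hence every corner lies in the closed first quadrant, and both arms are pinned to reach the respective axes.

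I would first prove the key structural lemma: for two frames $A,B$ with corners $(v_A,h_A)$ and $(v_B,h_B)$ (all coordinates nonnegative), $A$ and $B$ intersect if and only if their corners are incomparable under the coordinatewise (dominance) order, i.e.\ one corner is strictly to the upper-left of the other. The computation is short: the horizontal arm of $A$ (at height $h_A$, covering an $x$-interval whose left end is $\le 0$ and whose right end is $v_A$) meets the vertical arm of $B$ (at $x=v_B\ge 0$, covering a $y$-interval whose bottom is $\le 0$ and whose top is $h_B$) exactly when $v_B\le v_A$ and $h_A\le h_B$; the symmetric statement holds with $A,B$ swapped, and in general position the two horizontal arms can meet only if $h_A=h_B$ and the two vertical arms only if $v_A=v_B$. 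Under distinct coordinates this collapses precisely to the stated ``one corner NW of the other'' condition. Ties need separate but easy handling: two frames sharing a $v$-coordinate have collinear vertical arms, both crossing $y=0$, so they always intersect, and likewise for a shared $h$-coordinate; I would either assume general position or fold these forced adjacencies directly into the representation.

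With this lemma, the intersection graph is exactly the incomparability graph of a planar point set under dominance order, which is a permutation graph. Concretely, the two given lines serve as the two tracks of a permutation diagram: each frame is the chord joining its crossing point on the horizontal line (coordinate $v_i$) to its crossing point on the vertical line (coordinate $h_i$), and two frames intersect precisely when their chords cross, i.e.\ when $(v_i,h_i)$ and $(v_j,h_j)$ form an inversion. The representation is thus read off directly from the geometry, the order along each line being given by the $v_i$ and the $h_i$. It then remains to solve \mds~on a permutation graph presented by such a diagram, and here I would invoke the existing linear-time algorithms for minimum dominating set on permutation graphs; alternatively one can give a self-contained dynamic program that processes the frames in the order of the $x$-coordinates of their corners and, exploiting the simple inversion rule for adjacency, maintains only a constant amount of boundary information (such as the extremal as-yet-undominated frame and the extremal chosen frame), spending $O(1)$ per frame.

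The main obstacle is the structural lemma together with a careful treatment of degeneracies, since, unlike the generic case, frames with equal $x$- or $y$-coordinates are forced to intersect and must be reconciled with the permutation-diagram encoding. Once the permutation structure is established cleanly the algorithmic step is standard; the only remaining subtlety for the claimed linear running time is obtaining the order along the two lines without an explicit sort, which is available from the geometric representation.
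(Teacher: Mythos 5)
Your proposal is correct and takes essentially the same route as the paper: both reduce the problem to permutation graphs by reading the permutation off the two orders in which the \elf-frames cross the vertical and the horizontal line, and then invoke the known linear-time \mds~algorithm for permutation graphs (the paper cites Chao, Hsu, and Lee). Your write-up is in fact more detailed than the paper's (the explicit corner-incomparability lemma and the degeneracy handling are left implicit there); the only caveat is that your ``self-contained constant-state sweep'' alternative is not obviously adequate for \mds~on permutation graphs, so the argument should rest on the cited algorithm, exactly as the paper does.
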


To prove Theorem~\ref{thm:xyCrossingAlgorithm}, we show that this class of graphs are the same as permutation graphs for which given the permutation of the vertices, the \mds~problem can be solved in linear time~\cite{ChaoHL00}. As given a set of \elf-frames that intersect a vertical line and a horizontal line, the corresponding permutation can be computed in linear time, the theorem follows.

\begin{figure}[t]
\centering
\includegraphics[width=1.00\textwidth]{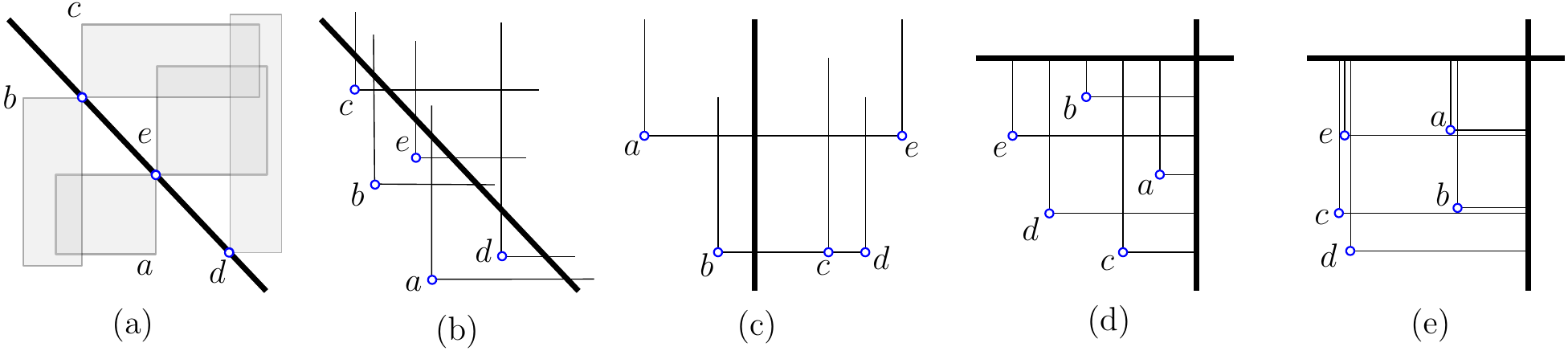}
\caption{A graph $G=(\{a,b,c,d,e\}, E)$ with five different representations, where $E=\{(a,b), (a,e), (b,c), (c,d), (c,e), (d,e)\}$.}
\label{fig:graphs}
\end{figure}

While the standard notion of intersection dates back to 1970s, Golumbic et al.~\cite{GolumbicLS09} introduced the notion of \emph{edge intersection} of \elf-frames. In this model, two \elf-frames are considered adjacent if and only if they overlap in strictly more than a single point in the plane. More formally, the \elf-frames corresponding to the vertices of the graph are drawn on a grid and two vertices are adjacent in the graph if and only if their corresponding \elf-frames share at least one grid edge; see Figure~\ref{fig:graphs}(e) for an example. To distinguish between the two models we will explicitly refer the edge intersection model whenever discussing a result on this model. Otherwise, we always mean the standard intersection model.

For the edge intersection model, there is a 4-approximation algorithm for \mds~on \elf-frames~\cite{ButmanHLR10,HeldtKU14}. Moreover, the problem was recently shown to be \apx-hard by Mehrabi~\cite{Mehrabi17}, where two ``types'' of \elf-frames are needed for the construction. He left open whether the problem remains \apx-hard when the input consists of only one type of \elf-frames or when the \elf-frames intersect a vertical line. We answer both questions affirmatively.
\begin{theorem}
\label{thm:apxEPG}
In the edge intersection model, the \mds~problem on \elf-frames of a single type is hard to approximate within a factor of 1.1377 even if all the \elf-frames intersect a vertical line from one side.
\end{theorem}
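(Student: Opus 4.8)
The plan is to prove \apx-hardness by a gap-preserving reduction from a known \apx-hard problem, and the natural candidate is \textsc{Minimum Dominating Set} on some restricted class of graphs with bounded degree or with an explicit inapproximability constant. Since the target factor $1.1377$ is a concrete number (rather than ``$1+\epsilon$ for some $\epsilon$''), I would expect the reduction to start from a problem whose hardness constant is already pinned down numerically---for instance \mds~on bounded-degree graphs or on a structured graph class whose optimal inapproximability ratio is known. The overall strategy is: take an instance $G$ of the source problem, and build in polynomial time a family of \elf-frames of a \emph{single} type, all crossing a common vertical line $\ell$ from the same side, whose \emph{edge}-intersection graph encodes $G$ (or a gadget-augmented version of $G$) so that the minimum dominating set in the edge-intersection model corresponds exactly, up to an additive/multiplicative bookkeeping term, to a minimum dominating set in $G$.

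First I would set up the geometry: fix the vertical line $\ell$ and place all the \elf-frames so that every horizontal segment crosses $\ell$ from the left (say), which forces all corners to lie on one side and guarantees the ``single type'' and ``one-sided'' constraints of the statement. The key design choice is that two \elf-frames are adjacent in the edge-intersection model \emph{only} when they share a full grid edge, so I must engineer overlaps along grid edges to realize exactly the adjacencies I want and, crucially, \emph{avoid} accidental edges created by frames that merely touch at a point or cross transversally. Next I would describe the vertex gadgets and edge gadgets: each vertex of $G$ becomes a ``main'' \elf-frame (or a small constant-size cluster of frames), and each edge of $G$ is realized by arranging the vertical or horizontal arms of the two incident vertex-frames so that they overlap along a shared grid segment. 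I would then verify the two directions of the correspondence---that a dominating set of $G$ lifts to a dominating set of the constructed instance of comparable size, and conversely that any dominating set of the instance can be ``normalized'' (pushed onto the canonical gadget frames) without increasing its size---so that $\opt$ of the instance equals $\opt(G)$ plus a controlled term.

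The hard part, I expect, will be making the reduction \emph{gap-preserving with the specific constant $1.1377$} rather than merely ``$\mathsf{APX}$-hard for some unspecified constant.'' This requires that the gadgets introduce only $O(1)$ extra frames per vertex/edge and, more delicately, that the additive overhead be \emph{linear} in the size of $G$ in a way that lets an explicit inapproximability ratio from the source problem propagate through the arithmetic; any super-constant blow-up per gadget would wash out the constant and only yield a weaker statement. Concretely, I would need a source problem with a clean numerical hardness bound and gadgets whose sizes are chosen so that $\opt(\text{instance}) = \alpha \cdot \opt(G) + \beta \cdot |V(G)|$ for explicit $\alpha,\beta$, and then solve for the resulting gap to confirm it matches $1.1377$; tuning these gadget parameters while simultaneously respecting the single-type, one-sided, vertical-crossing geometric constraints is where the bulk of the technical care will go. A secondary obstacle is ruling out spurious edge-intersections in the edge model: because adjacency demands a shared grid edge, I must argue that no two gadget frames accidentally align along a grid edge, which will likely require a careful coordinate assignment (e.g.\ spacing the frames on distinct grid rows/columns with padding) that I would lay out explicitly in the construction.
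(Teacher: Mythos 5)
Your proposal is a reasonable high-level template for a gap-preserving reduction, but it has a genuine gap: you never identify the source problem and gadget structure that actually produce the constant $1.1377$, and your leading guess (\mds~on bounded-degree or structured graphs) is not the one that works. The paper reduces from \textsc{Vertex Cover}, using the Dinur--Safra inapproximability result: for $1/3 < p < (3-\sqrt{5})/2$ and $q = 4p^3 - 3p^4$, it is \np-hard to distinguish $VC(G) < (1-p+\epsilon)|V|$ from $VC(G) > (1-q-\epsilon)|V|$. The number $1.1377$ is precisely $(2-q)/(2-p)$ evaluated at $p = (3-\sqrt{5})/2$; it is a fingerprint of this particular source theorem combined with a reduction in which the optimal dominating set size equals $VC(G) + n$ exactly. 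Your plan defers exactly this step (``solve for the resulting gap to confirm it matches $1.1377$''), but that is the crux of the proof, not a detail: without choosing Vertex Cover as the source and engineering the additive overhead to be exactly $n$, no amount of tuning generic vertex/edge gadgets will land on this constant.

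The second missing ingredient is the mechanism that makes the $+n$ overhead tight in both directions. In the paper's construction, each vertex $v_i$ gets a frame $L_i$, each edge $(v_i,v_j)$ gets a frame $L_{ij}$ sharing a grid edge with both $L_i$ and $L_j$, and --- crucially --- each vertex also gets two auxiliary frames $A_{i1}, A_{i2}$ where $A_{i2}$ intersects only $A_{i1}$. These auxiliary pairs force every dominating set to spend at least one frame per vertex (and a normalization argument shows one may assume it is always $A_{i1}$, and that edge frames $L_{ij}$ can be swapped for vertex frames $L_j$ without loss). This yields the clean equivalence ``$G$ has a vertex cover of size $k$ iff the instance has a dominating set of size $k+n$,'' which is what turns the Vertex Cover gap $(1-q)/(1-p)$ into the dominating-set gap $(2-q)/(2-p)$. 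Your proposal gestures at ``normalization'' and ``controlled additive terms'' but supplies neither the forcing gadget nor the exchange argument, so as written the reduction cannot be completed to give the stated bound. (Your geometric concerns --- single type, one-sided crossing of a vertical line, avoiding spurious shared grid edges --- are valid but are handled routinely by the coordinate assignment; they are not where the difficulty lies.)
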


Furthermore, we show that even intersecting two lines does not help: the \mds~problem is \np-hard on \elf-frames in the edge intersection model even if every \elf-frame intersects a vertical line and a horizontal line. Observe that this is in contrast to the existence of the linear-time algorithm of Theorem~\ref{thm:xyCrossingAlgorithm} under the standard intersection model.

\subparagraph{Organization.} 
In Section \ref{sec:prelims}, we give some definitions and revisit some necessary background. We prove Theorem~\ref{thm:twoPlusEpsilon} and~\ref{thm:apxDiagonal} in Section~\ref{sec:diagonalPTAS}. The proofs of Theorem~\ref{thm:apxVertical} and~\ref{thm:xyCrossingAlgorithm} are given in Section~\ref{sec:yCrossing}. Finally, we show the results for the edge intersection model in Section~\ref{sec:edgeIntersection} and conclude the paper in Section~\ref{sec:conclusion}. Throughout this paper, the proofs of lemmas and theorems marked with $(*)$ are given in the full version of the paper due to space constraints.

\section{Preliminaries}\label{sec:prelims}
We denote the $x$- and $y$-coordinates of a point $p$ by $x(p)$ and $y(p)$, respectively. For two points $p$ and $q$, we denote the Euclidean distance between $p$ and $q$ by $\dis(p,q)$. Given a graph $G=(V,E)$, we denote the \elf-frame corresponding to a vertex $u\in V(G)$ by $\pat(u)$; we use $u$ and $\pat(u)$ interchangeably. We denote the corner of an \elf-frame $l$ by $\cor(l)$.

\subparagraph{Local search.} Consider an optimization problem in which the objective is to compute a feasible subset $S'$ of a ground set $S$ whose cardinality is minimum over all such feasible subsets of $S$. Moreover, it is assumed that computing some initial feasible solution and determining whether a subset $S'\subseteq S$ is a feasible solution can be done in polynomial time. The local search algorithm for a minimization problem is as follows. Fix some parameter $k$, and let $A$ be some initial feasible solution for the problem. In each iteration, if there are $A'\subseteq A$ and $M\subseteq S\setminus A$ such that $|A'|\leq k$, $|M|<|A'|$ and $(A\setminus A')\cup M$ is a feasible solution, then set $A:=(A\setminus A')\cup M$ and re-iterate. The algorithm returns $A$ and terminates when no such local improvement is possible.

Clearly, the local search algorithm runs in polynomial time. Let $\mathcal{B}$ and $\mathcal{R}$ be the solution returned by the algorithm and an optimal solution, respectively. We can assume that $\mathcal{B}\cap\mathcal{R}=\emptyset$; otherwise, we can remove the common elements of $\mathcal{B}$ and $\mathcal{R}$ and analyze the algorithm with the new sets, which guarantees that the approximation factor of the original instance is upper bounded by that of the new sets. The following result establishes the connection between local search technique and obtaining a $\mathsf{PTAS}$.
\begin{theorem}[\cite{ChanH12,MustafaR10}]
\label{thm:LSgivesPTAS}
Consider the solutions $\mathcal{B}$ and $\mathcal{R}$ for a minimization problem, and suppose that there exists a planar bipartite graph $H=(\mathcal{B}\cup\mathcal{R}, E)$ that satisfies the local exchange property, which is as follows: for any subset $\mathcal{B}'\subseteq\mathcal{B}$, $(\mathcal{B}\setminus\mathcal{B'})\cup N_H(\mathcal{B'})$ is a feasible solution, where $N_H(\mathcal{B'})$ denotes the set of neighbours of $\mathcal{B'}$ in $H$. Then, the local search algorithm yields a \ptas for the problem.
\end{theorem}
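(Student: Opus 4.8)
The plan is to derive the approximation guarantee from local optimality by exploiting the planarity of $H$ through a partition theorem for planar graphs. Write $b=|\mathcal{B}|$ and $\rho=|\mathcal{R}|$; the goal is to show $b\le(1+c/\sqrt{k})\,\rho$ for an absolute constant $c$, after which setting $k=\Theta(1/\epsilon^2)$ yields $b\le(1+\epsilon)\rho$ while the algorithm runs in time $n^{O(k)}$, which is polynomial for each fixed $\epsilon$, giving the \ptas. Throughout I would use two simple facts: the monotonicity of feasibility (for a domination/covering-type problem any superset of a feasible solution is feasible), and the assumption $\mathcal{B}\cap\mathcal{R}=\emptyset$ noted before the statement.

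First I would apply a partition theorem for planar graphs (in the style of Frederickson) to the $n$-vertex planar graph $H$ with a parameter $r$, which I will ultimately set to $k$. This partitions $V(H)=\mathcal{B}\cup\mathcal{R}$ into regions $V_1,\dots,V_m$, each with at most $r$ vertices, so that the interiors are pairwise disjoint and only \emph{boundary} vertices are shared between regions, every region contains $O(\sqrt{r})$ boundary vertices, and the total number of boundary vertices is $O(n/\sqrt{r})$. The essential consequence is that an interior vertex of a region has all of its $H$-neighbours inside that same region.

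The central charging step uses local optimality region by region. Fix a region $V_i$, and let $B_i$ be its interior blue vertices and $R_i$ its interior red vertices. Since $|B_i|\le r=k$, I would apply the local exchange property with $\mathcal{B}'=B_i$: the set $(\mathcal{B}\setminus B_i)\cup N_H(B_i)$ is feasible, and by monotonicity so is $(\mathcal{B}\setminus B_i)\cup(N_H(B_i)\cap\mathcal{R})$. Because $B_i$ is interior, $N_H(B_i)\cap\mathcal{R}$ is contained in $R_i$ together with the red boundary vertices of $V_i$. If $|N_H(B_i)\cap\mathcal{R}|<|B_i|$, this would be an improving swap of size at most $k$, contradicting the termination of the local search; hence $|B_i|\le|R_i|+|\partial_i|$, where $\partial_i$ denotes the red boundary vertices of $V_i$.

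Finally I would sum these inequalities over all regions. The interior blue sets partition $\mathcal{B}$ up to boundary vertices, so $\sum_i|B_i|\ge b-O(n/\sqrt{r})$; the interior red sets are disjoint, so $\sum_i|R_i|\le\rho$; and $\sum_i|\partial_i|$ is bounded by the total boundary size times the (bounded) number of regions meeting a boundary vertex, giving $O(n/\sqrt{r})$. Combining these yields $b\le\rho+O(n/\sqrt{r})$ with $n=b+\rho$, which rearranges to $b\le(1+O(1/\sqrt{r}))\,\rho$, completing the argument. The main obstacle is the bookkeeping in the charging step: guaranteeing that interior blue vertices reach red vertices only within their own region, and accounting correctly for the shared boundary so that the accumulated error is genuinely $O(n/\sqrt{r})$. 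The planarity of $H$ is precisely what makes the partition bounds, and hence this error term, small enough to drive the ratio down to $1+\epsilon$.
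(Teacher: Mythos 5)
Your proposal is correct: this theorem is quoted in the paper from \cite{ChanH12,MustafaR10} without proof, and your argument --- Frederickson's $r$-division of the planar exchange graph $H$, the region-by-region charging $|B_i|\le|R_i|+|\partial_i|$ via local optimality on interior blue vertices, and summation with the $O(n/\sqrt{r})$ boundary bound --- is precisely the proof given in those cited works. The only cosmetic remark is that invoking monotonicity of feasibility to pass to $N_H(B_i)\cap\mathcal{R}$ is unnecessary, since $H$ is bipartite and $N_H(B_i)\subseteq\mathcal{R}$ automatically.
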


\section{Diagonal-intersecting Rectangles}
\label{sec:diagonalPTAS}
In this section, we prove Theorem~\ref{thm:twoPlusEpsilon} and~\ref{thm:apxDiagonal}. To prove Theorem~\ref{thm:twoPlusEpsilon}, we first give a \ptas for the problem when the rectangles are anchored at the diagonal from only one side and will then prove the theorem by applying the \ptas twice. 

Recall the class of intersection graphs of diagonal-anchored rectangles is same as that of diagonal-anchored \elf-frames~\cite{CatanzaroCFHHHS17}. As such, to simplify the presentation of the result, we prove Theorem~\ref{thm:twoPlusEpsilon} for \elf-frames.

\subsection{PTAS}
\label{subsec:ptas}
Suppose that we are given a set of \elf-frames each of which is anchored at the diagonal from above; let $G=(V, E)$ be the corresponding graph. Consider any two \elf-frames $\pat_1$ and $\pat_2$ that intersect each other. We say that $\pat_1$ and $\pat_2$ are \emph{coincident} if $x(\cor(\pat_1))$=$x(\cor(\pat_2))$. $\pat_1$ is said to intersect $\pat_2$ from left (resp. from below) if $x(\cor(\pat_1))\le x(\cor(\pat_2))$ (resp. $x(\cor(\pat_1))$ $\ge$ $x(\cor(\pat_2))$). Notice that $\pat_1$ intersects $\pat_2$ from left if and only if $\pat_2$ intersects $\pat_1$ from below. 

Consider the \mds~problem on $G$ and run the local search algorithm with $k:=c/\epsilon$ for some constant $c$. Let $\mathcal{B}$ be the solution returned by the local search algorithm and $\mathcal{R}$ denote an optimal solution. Consider the bipartite graph $H=(\mathcal{B}\cup \mathcal{R}, E')$ in which the edge set $E'$ is defined as follows. For any vertex $u\in V$, consider the set of all \elf-frames in $\{\mathcal{B}\cup\mathcal{R}\}$ that intersect $\pat(u)$ and let $(b_i, r_j)$, where $b_i\in\mathcal{B}$ and $r_j\in\mathcal{R}$, be a pair for which $\dis(\cor(b_i), \cor(r_j))$ (i.e., the Euclidean distance between their corners) is minimum over all such pairs. Then, $(b_i, r_j)\in E'$. We call $u$ a \emph{witness} vertex for the pair $(b_i, r_j)$. See Figure~\ref{fig:closestCorners} for an example of how an edge of $H$ is determined.

\begin{figure}[!h]
\centering
\includegraphics[width=0.50\textwidth]{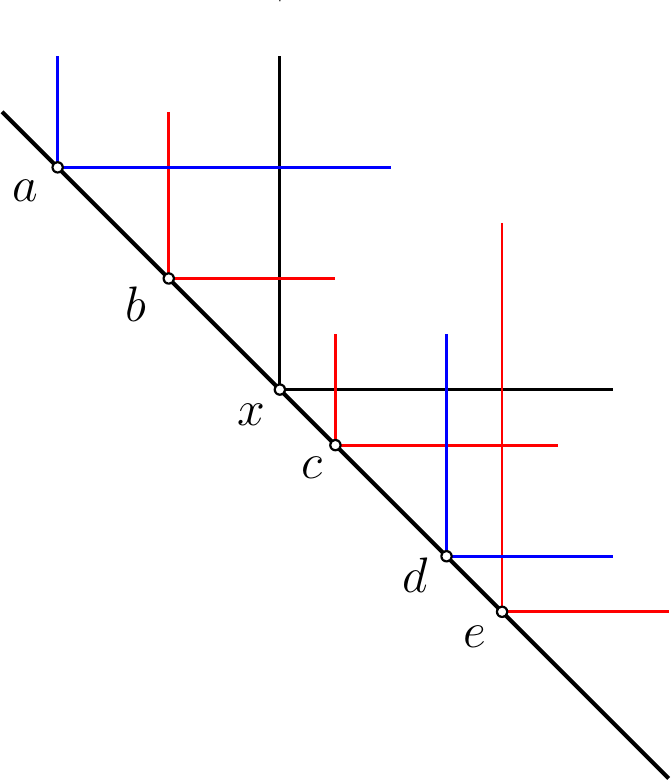}
\caption{The \elf-frames $a,d\in\mathcal{B}$ (blue) and $b,c,e\in\mathcal{R}$ (red) intersect the witness $x$. Since $\cor(d)$ and $\cor(e)$ are the closest pair among all possible red-blue corner pairs, we have $(d,e)\in E'$.}
\label{fig:closestCorners}
\end{figure}

In the following, we show that $H$ is planar (Lemma~\ref{lem:noTopEdgeCrossing}--\ref{lem:noTwoMixedCrossing}) and will then prove that $H$ satisfies the local exchange property (Lemma~\ref{lem:localExchange}). Then, by Theorem~\ref{thm:LSgivesPTAS}, our local search algorithm yields a \ptas for the problem on $G$. To distinguish between the edges of $G$ and those of $H$, we refer to the edges of $H$ as \emph{arcs}. Let $b_i\in \mathcal{B}$ and $r_j\in \mathcal{R}$ such that $(b_i,r_j)\in E'$. We say that $(b_i,r_j)$ is a \emph{top arc} if there is a witness $u$ for $(b_i, r_j)$, such that \emph{both} $\pat(b_i)$ and $\pat(r_j)$ intersect $\pat(u)$ from left; choose one such $u$ arbitrarily and denote $\pat(u)$ by $\wit(b_i, r_j)$. Otherwise, if there is a witness $v$ for $(b_i, r_j)$, such that \emph{both} $\pat(b_i)$ and $\pat(r_j)$ intersect $\pat(v)$ from below, $(b_i, r_j)$ is a \emph{down arc}; choose one such $v$ arbitrarily and denote $\pat(v)$ by $\wit(b_i, r_j)$. Otherwise, for any witness $w$ of $(b_i, r_j)$, $\pat(b_i)$ and $\pat(r_j)$ intersect $w$ from different sides; we call $(b_i,r_j)$ a \emph{mixed arc}, and choose one such $w$ arbitrarily and denote $\pat(w)$ by $\wit(b_i, r_j)$.

\begin{figure}[t]
\centering
\includegraphics[width=.80\textwidth]{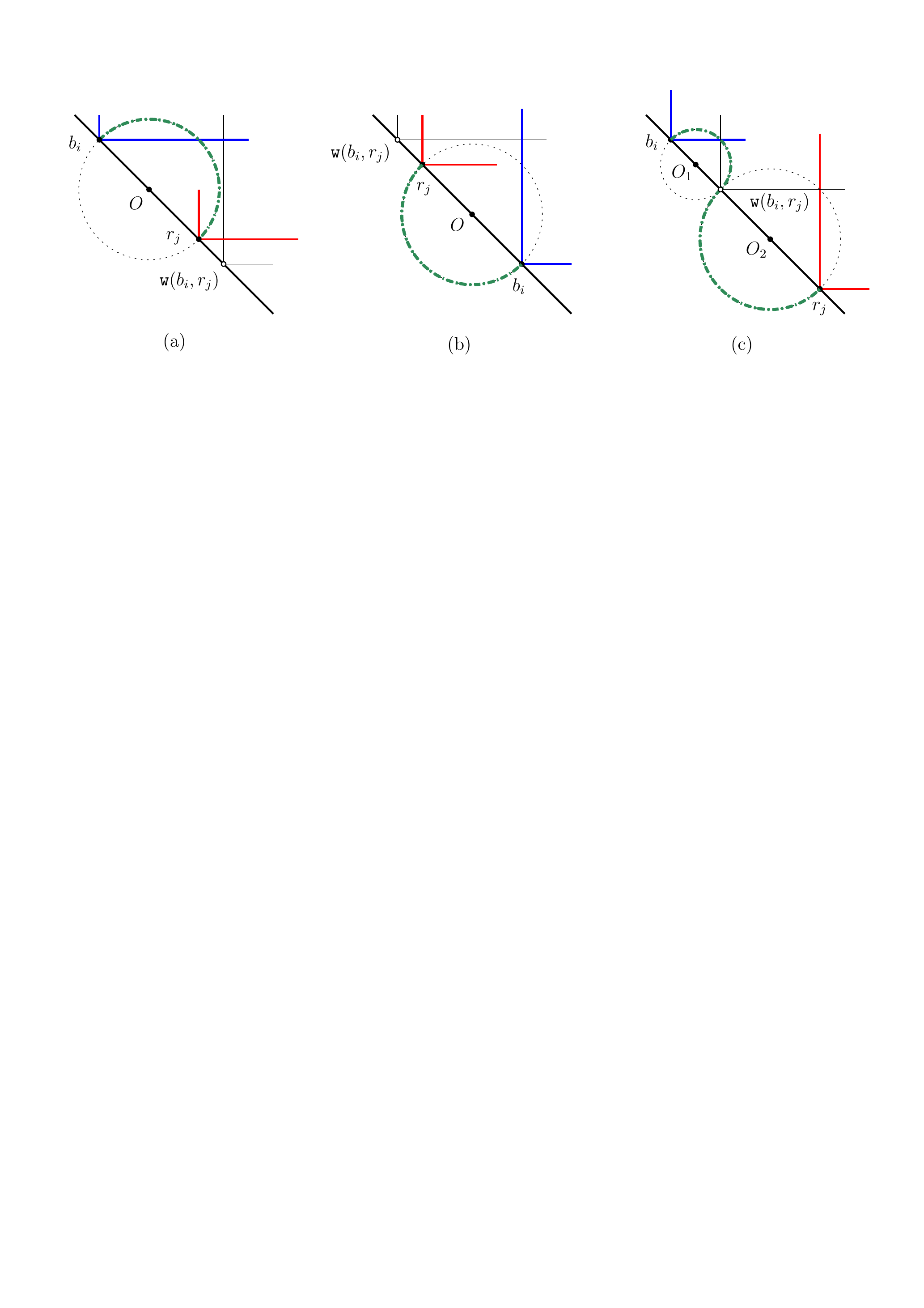}
\caption{Drawing of (a) a top arc, (b) a down arc, and (c) a mixed arc.}
\label{fig:edgeTypes}
\end{figure}

\subparagraph{Drawing of $H$.} To draw $H$, we map $u$ to $\cor(u)$ on the diagonal $D$ for all $u\in \mathcal{B}\cup \mathcal{R}$. To draw the arcs of $H$, the idea is to draw each arc either above or below the diagonal depending on whether it is a top arc or a down arc, respectively. A mixed arc is drawn in such a way that some part of the arc is drawn above and some part is drawn below the diagonal (and, hence crosses the diagonal). We next give the details of each case.

Let $b_i\in \mathcal{B}$ and $r_j\in\mathcal{R}$ such that $(b_i,r_j)\in E'$ and assume w.l.o.g. that $x(\cor(b_i))<x(\cor(r_j))$. Moreover, let
\[
O:=(\frac{x(\cor(b_i))+x(\cor(r_j))}{2}, \frac{y(\cor(b_i))+y(\cor(r_j))}{2}),
\]
and consider the circle $C$ centred at $O$ with radius $\dis(\cor(b_i),\cor(r_j))/2$. If the arc $(b_i,r_j)$ is a top arc (resp., down arc), then we draw it on the half circle of $C$ that lies above the diagonal $D$ (resp., that lies below the diagonal $D$) starting from $\cor(b_i)$ and ending at $\cor(r_j)$; see Figure~\ref{fig:edgeTypes}(a)---(b) for an illustration. A mixed arc is drawn in a slightly different way. For a mixed arc $(b_i, r_j)$, assume w.l.o.g. that $\pat(b_i)$ intersects $\wit(b_i,r_j)$ from left while $\pat(r_j)$ intersects $\wit(b_i,r_j)$ from below. Notice that $x(\cor(b_i))<x(\cor(\wit(b_i,r_j)))<x(\cor(r_j))$. To draw $(b_i,r_j)$, let
\[
O_1:=(\frac{x(\cor(b_i))+x(\cor(\wit(b_i,r_j)))}{2}, \frac{y(\cor(b_i))+y(\cor(\wit(b_i,r_j)))}{2}),
\]
\[
O_2:=(\frac{x(\cor(\wit(b_i,r_j)))+x(\cor(r_j))}{2}, \frac{y(\cor(\wit(b_i,r_j)))+y(\cor(r_j))}{2}),
\]
and consider the following two circles: the circle $C_1$ that is centred at $O_1$ and has the radius $\dis(\cor(b_i),\cor(\wit(b_i,r_j)))/2$, and the circle $C_2$ that is centred at $O_2$ with radius $\dis(\cor(\wit(b_i,r_j)),\cor(r_j))/2$. See Figure~\ref{fig:edgeTypes}(c). We draw the first part of arc $(b_i,r_j)$ on the half circle of $C_1$ that lies above the diagonal starting from $\cor(b_i)$ and ending at $\cor(\wit(b_i,r_j))$ and then the second part on the half circle of $C_2$ that lies below the diagonal starting from $\cor(\wit(b_i,r_j))$ and ending at $\cor(r_j)$. See Figure~\ref{fig:exampleGraphH} for an example.

\begin{figure}[!h]
\centering
\includegraphics[width=0.90\textwidth]{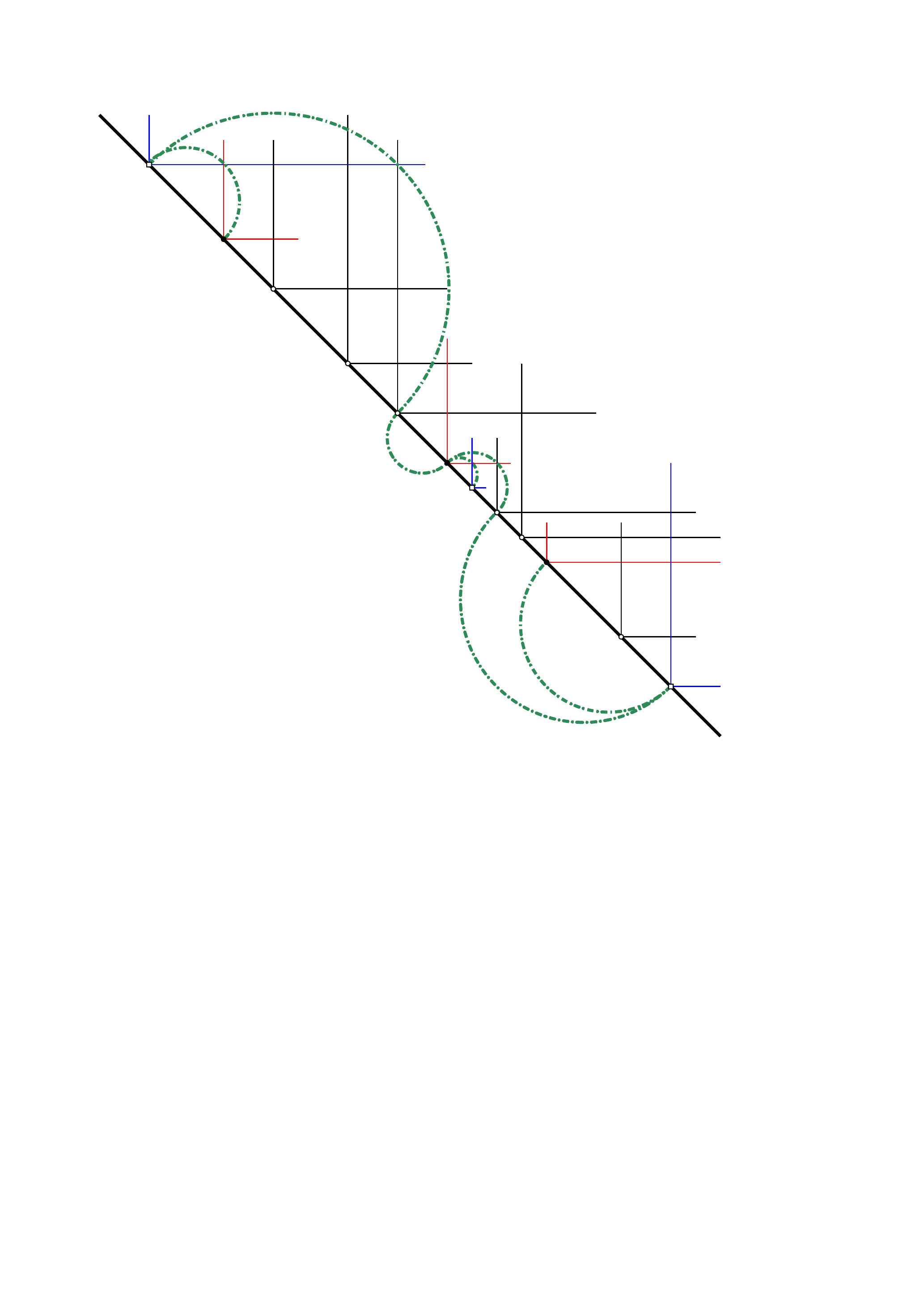}
\caption{A drawing of the graph $H$. The corner of the \elf-frames that are in $\mathcal{R}$ (resp., $\mathcal{B}$) are marked by black circles (resp., white squares).}
\label{fig:exampleGraphH}
\end{figure}

\subparagraph{Planarity of $H$.} Clearly, no top arc crosses a down arc (except perhaps at their endpoints). To show the planarity of $H$, we show that --- no top arc crosses another top arc, no down arc crosses another down arc, and no mixed arc crosses a top, a down or another mixed arc. 
\begin{lemma}
\label{lem:noTopEdgeCrossing}
No two top arcs in $H$ cross each other.
\end{lemma}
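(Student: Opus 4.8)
The plan is to reduce the claim to a statement about when two semicircular arcs drawn above the diagonal $D$ can cross, and then to rule out a crossing by exploiting the minimality built into the definition of $E'$. Since each top arc is drawn as the upper half of the circle whose diameter is the segment between the two corners on $D$, two top arcs cross in their interiors if and only if the two corresponding $x$-intervals on $D$ strictly interleave. Writing the endpoint $x$-coordinates of the two arcs as $\{p_1,q_1\}$ and $\{p_2,q_2\}$, a crossing therefore forces (after renaming) $p_1<p_2<q_1<q_2$ with all four corners distinct, since if two endpoints coincided the arcs would only meet at a shared endpoint. So I would assume such an interleaving and derive a contradiction.

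Before the main argument I would record the basic geometry of ``intersecting from left'' for \elf-frames anchored above $D$. If a frame $\pat(a)$ with the smaller corner $x$-coordinate intersects a frame $\pat(c)$, the only possible contact is the horizontal arm of $\pat(a)$ meeting the vertical arm of $\pat(c)$; hence $\pat(a)$ intersects $\pat(c)$ from left exactly when the horizontal arm of $\pat(a)$ reaches $x(\cor(c))$ and the vertical arm of $\pat(c)$ reaches $y(\cor(a))$. I would use two consequences repeatedly: the witness $w$ of a top arc satisfies $x(\cor(w))\ge x(\cor(\cdot))$ for both its endpoints, and (because $D$ has slope $-1$, so smaller $x$ means larger $y$) the vertical arm of $w$ reaches the height of the leftmost, hence highest, endpoint.

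The heart of the proof is to show that, under the interleaving $p_1<p_2<q_1<q_2$, one of the two ``inner'' corners already intersects the witness of the other arc. Let $f$ be the endpoint of the second arc sitting at $p_2$ and $g$ the endpoint of the first arc at $q_1$, and write $w_1=\wit(b_1,r_1)$ and $w_2=\wit(b_2,r_2)$. Because $f$ is an endpoint of the top arc $(b_2,r_2)$, its horizontal arm reaches $x(\cor(w_2))$; and because $b_1,r_1$ both meet $w_1$ from left, the vertical arm of $w_1$ reaches the height of the corner at $p_1$, which exceeds $y(\cor(f))$ since $p_1<p_2$. Symmetrically, $g$'s horizontal arm reaches $x(\cor(w_1))$ and the vertical arm of $w_2$ clears $y(\cor(g))$. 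I would then split on whether $x(\cor(w_1))\le x(\cor(w_2))$: in the first case $f$'s horizontal arm reaches $x(\cor(w_1))$, so $f$ intersects $\pat(w_1)$; in the second case $g$'s horizontal arm reaches $x(\cor(w_2))$, so $g$ intersects $\pat(w_2)$.

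To finish, suppose we are in the case where $f$ intersects $\pat(w_1)$ (the other case is symmetric). Since $p_1<p_2<q_1$, the corner $\cor(f)$ lies strictly between $\cor(b_1)$ and $\cor(r_1)$ on $D$, so $f$ is strictly closer to one of them than $\dis(\cor(b_1),\cor(r_1))$. Pairing the colored vertex $f$ with whichever of $b_1,r_1$ has the opposite color yields a blue--red pair that intersects $\pat(w_1)$ and whose corners are strictly closer than $(b_1,r_1)$, contradicting the choice of $(b_1,r_1)$ as the minimum-distance blue--red pair among frames intersecting the witness $\pat(w_1)$. The step I expect to be the crux is establishing that an inner corner actually intersects the other arc's witness --- in particular the ``reach'' bookkeeping that compares $x(\cor(w_1))$ with $x(\cor(w_2))$ to guarantee the horizontal arm extends far enough; once that is in place, the contradiction with the minimality in the definition of $E'$ is immediate.
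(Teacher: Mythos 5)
Your proof is correct and follows essentially the same route as the paper's: reduce a crossing to the interleaved order of the four corners on $D$, use that both witnesses of top arcs have corners lying to the right of their arcs' endpoints, case-split on which witness corner is further right, conclude that an inner endpoint of one arc must intersect the other arc's witness, and contradict the minimum-distance choice of pairs in the definition of $E'$. The only difference is that you make explicit the horizontal/vertical arm-reach bookkeeping (via the slope $-1$ of $D$) that the paper's proof leaves implicit.
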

\begin{proof}
Suppose for a contradiction that two top arcs $(a,b), (c,d)\in E'$ cross each other, and w.l.o.g. assume that $x(a)<x(c)<x(b)<x(d)$. Since $(a,b)$ is a top arc, we must have $x(\cor(\wit(a,b)))\ge x(b)$; for a similar reason, we must have $x(\cor(\wit(c,d)))\ge x(d)$. We now consider two cases. (i) If $x(\cor(\wit(a,b)))\leq x(\cor(\wit(c,d)))$, then $\pat(c)$ must intersect $\wit(a,b)$, which is a contradiction because in that case we should have added $(a,c)$ or $(c,b)$ to $E'$ corresponding to $\wit(a,b)$ instead of $(a,b)$. (ii) If $x(\cor(\wit(a,b)))>x(\cor(\wit(c,d)))$, then $\pat(b)$ must intersect $\wit(c,d)$ --- this is also a contradiction because in that case we should have added $(c,b)$ or $(b,d)$ to $E'$ corresponding to $\wit(c,d)$ instead of $(c,d)$.
\end{proof}

\begin{figure}[t]
\centering
\includegraphics[width=0.85\textwidth]{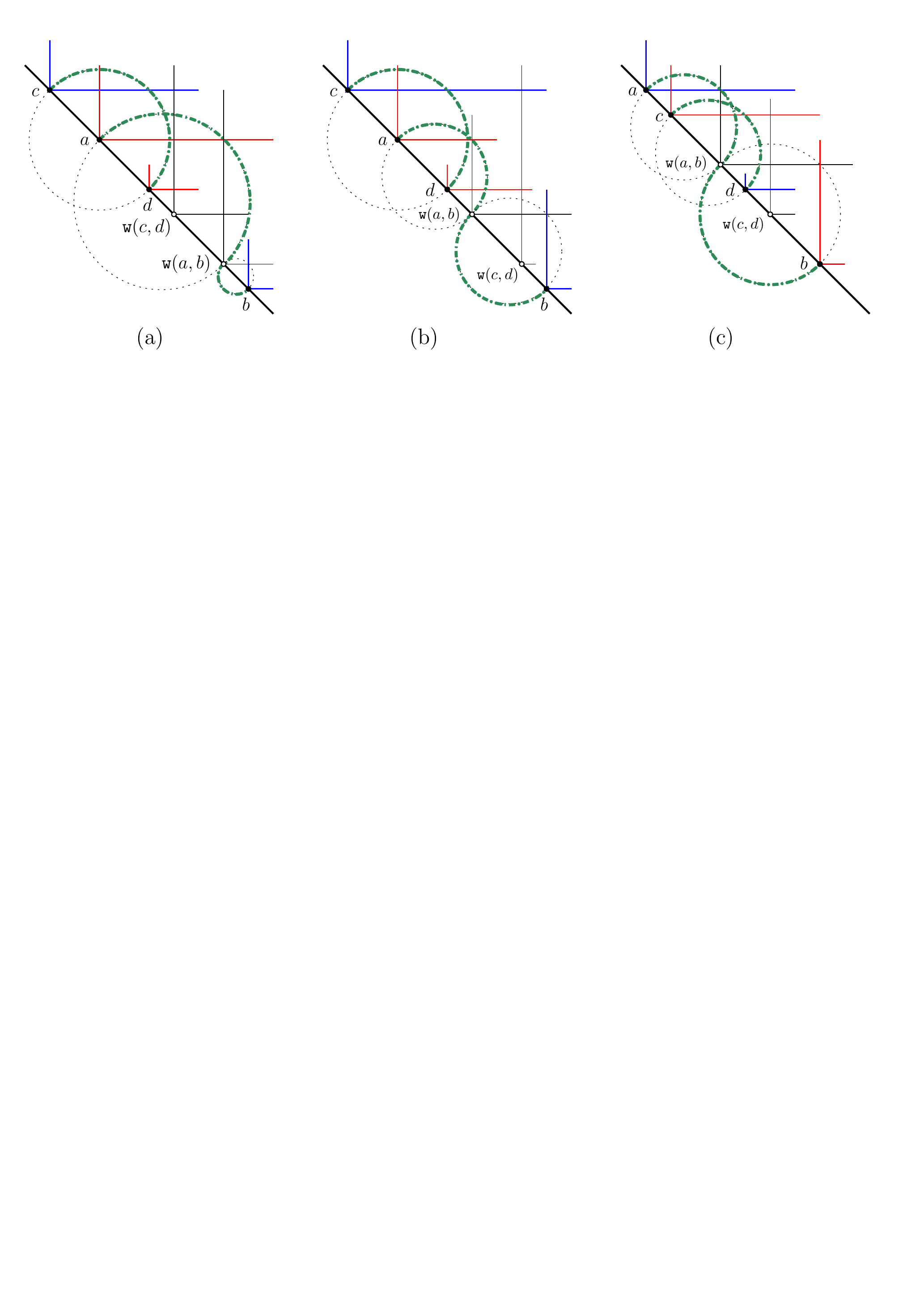}
\caption{An illustration in supporting the proof of Lemma~\ref{lem:noMixednoTopCrossing}.}
\label{fig:mixedTop}
\end{figure}

\begin{lemma}
\label{lem:noDownEdgeCrossing}
No two down arcs in $H$ cross each other.
\end{lemma}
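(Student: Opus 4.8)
The plan is to argue by contradiction, mirroring the proof of Lemma~\ref{lem:noTopEdgeCrossing} by reflecting the whole configuration across the diagonal $D$. Suppose two down arcs $(a,b),(c,d)\in E'$ cross, and assume without loss of generality that $x(a)<x(c)<x(b)<x(d)$. Because $(a,b)$ is a down arc, both $\pat(a)$ and $\pat(b)$ intersect $\wit(a,b)$ from below, so the witness lies to the left of both corners and $x(\cor(\wit(a,b)))\le x(a)$; symmetrically $x(\cor(\wit(c,d)))\le x(c)$. First I would record the elementary geometric fact dual to the one implicitly used for top arcs: since all corners lie on $D$, a frame $\pat(f)$ intersects a witness $\pat(w)$ from below exactly when the vertical segment of $\pat(f)$ meets the horizontal segment of $\pat(w)$, which happens iff $x(\cor(f))$ does not exceed the right end of that horizontal segment and the vertical segment of $\pat(f)$ reaches up to the height $y(\cor(w))$.

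With this in hand I would split into the cases $x(\cor(\wit(a,b)))\le x(\cor(\wit(c,d)))$ and $x(\cor(\wit(a,b)))> x(\cor(\wit(c,d)))$, each of which forces a frame lying strictly between the endpoints of one arc to intersect that arc's witness. In the first case I would show $\pat(b)$ intersects $\wit(c,d)$: its horizontal position satisfies $x(\cor(\wit(c,d)))\le x(c)<x(b)<x(d)$, and since $\pat(d)$ already intersects $\wit(c,d)$ from below, the inequality $x(b)<x(d)$ keeps $x(b)$ within the horizontal reach of $\wit(c,d)$; the height requirement holds because $\pat(b)$ reaches up to $y(\cor(\wit(a,b)))\ge y(\cor(\wit(c,d)))$, the last inequality coming from $x(\cor(\wit(a,b)))\le x(\cor(\wit(c,d)))$ read off $D$. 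The second case is symmetric, showing $\pat(c)$ intersects $\wit(a,b)$ using that $\pat(b)$ already intersects $\wit(a,b)$ for the horizontal reach and that $\pat(c)$ reaches up to $y(\cor(\wit(c,d)))\ge y(\cor(\wit(a,b)))$ for the height.

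Finally, in either case I would derive the contradiction exactly as in Lemma~\ref{lem:noTopEdgeCrossing}: the intervening middle frame, paired with the endpoint of the relevant arc of the opposite colour, forms a blue--red corner pair strictly closer than the one that defined that arc (all three corners are collinear on $D$, with the middle corner nested between the arc's two corners), so the algorithm would have selected $(c,b)$ or $(b,d)$ (resp.\ $(a,c)$ or $(c,b)$) for that witness instead, contradicting $(c,d)$ (resp.\ $(a,b)$) $\in E'$. I expect the geometric middle step to be the main obstacle, since it is not purely combinatorial: it requires simultaneously the horizontal-extent bound, inherited from the \emph{outer} frame's known intersection with the witness, and the vertical-height bound, inherited from the \emph{middle} frame's own intersection with the other witness. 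It is precisely the duality between ``from left'' and ``from below'' under reflection across $D$ that lets the top-arc argument carry over verbatim.
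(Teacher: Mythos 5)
Your proof is correct and takes essentially the same route as the paper's: the same w.l.o.g.\ ordering $x(a)<x(c)<x(b)<x(d)$, the same two-case split on the relative $x$-positions of $\wit(a,b)$ and $\wit(c,d)$, the same conclusions ($\pat(b)$ intersects $\wit(c,d)$, resp.\ $\pat(c)$ intersects $\wit(a,b)$), and the same minimum-corner-distance contradiction showing that $(a,c)$, $(c,b)$ or $(b,d)$ would have been selected instead. The only difference is that you make explicit the horizontal-reach and vertical-height conditions for the intersection, which the paper leaves implicit.
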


\begin{proof}
 Suppose for a contradiction that two down arcs $(a,b), (c,d)\in E'$ cross each other, and w.l.o.g. assume that $x(a)<x(c)<x(b)<x(d)$. Since $(a,b)$ is a down arc, we must have $x(\wit(a,b))\le x(a)$; for a similar reason, we must have $x(\wit(c,d))\le x(c)$. We now consider two cases. (i) If $x(\wit(a,b))\geq x(\wit(c,d))$, then $\pat(c)$ must intersect $\wit(a,b)$, which is a contradiction because in that case we should have added either $(a,c)$ or $(c,b)$ to $E'$ corresponding to the witness $\wit(a,b)$ instead of the arc $(a,b)$. (ii) If $x(\wit(a,b))<x(\wit(c,d))$, then $\pat(b)$ must intersect $\wit(c,d)$ --- this is a contradiction because in that case we should have added either $(c,b)$ or $(b,d)$ to $E'$ corresponding to the witness $\wit(c,d)$, instead of the arc $(c,d)$.
\end{proof}

It remains to show that no mixed arc crosses a top arc, a down arc or another mixed arc.
\begin{lemma}
\label{lem:noMixednoTopCrossing}
No mixed arc crosses a top arc in $H$.
\end{lemma}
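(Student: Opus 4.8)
The plan is to argue by contradiction, exactly mirroring the structure of the proofs of Lemma~\ref{lem:noTopEdgeCrossing} and Lemma~\ref{lem:noDownEdgeCrossing}. Suppose a mixed arc $(a,b)$ crosses a top arc $(c,d)$. Recall that a mixed arc $(a,b)$ is drawn in two pieces: the first piece lies \emph{above} the diagonal $D$ on the circle $C_1$ spanning from $\cor(a)$ to $\cor(\wit(a,b))$, and the second piece lies \emph{below} $D$ on the circle $C_2$ spanning from $\cor(\wit(a,b))$ to $\cor(b)$ (assuming, as in the drawing, that $\pat(a)$ meets $\wit(a,b)$ from left and $\pat(b)$ meets it from below, so that $x(\cor(a))<x(\cor(\wit(a,b)))<x(\cor(b))$). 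Since the top arc $(c,d)$ is drawn entirely above $D$, any crossing with $(a,b)$ must occur on the \emph{upper} piece of the mixed arc, i.e.\ on the half-circle of $C_1$ between $\cor(a)$ and $\cor(\wit(a,b))$. The key reduction, therefore, is that the crossing effectively happens between the top arc $(c,d)$ and a ``top-arc-like'' segment whose endpoints are $\cor(a)$ and $\cor(\wit(a,b))$, and whose own witness $\wit(a,b)$ satisfies $x(\cor(\wit(a,b)))\ge x(\cor(b))$ where $b$ here plays the role of the right endpoint of the upper piece --- the same inequality the proof of Lemma~\ref{lem:noTopEdgeCrossing} exploited.

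Concretely, I would fix the $x$-order of the four corners $\cor(a), \cor(c), \cor(\wit(a,b)), \cor(d)$ forced by the assumption that the upper piece of $(a,b)$ crosses $(c,d)$; this is where Figure~\ref{fig:mixedTop} is invoked to pin down the crossing configuration. Because $(c,d)$ is a top arc we have $x(\cor(\wit(c,d)))\ge x(\cor(d))$, and because of the witness $\wit(a,b)$ lying to the right, the upper piece behaves just like a top arc with right endpoint $\cor(\wit(a,b))$. I would then split into the two cases according to whether $x(\cor(\wit(a,b)))\le x(\cor(\wit(c,d)))$ or $x(\cor(\wit(a,b)))> x(\cor(\wit(c,d)))$, exactly as in Lemma~\ref{lem:noTopEdgeCrossing}. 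In the first case the left endpoint $c$ (or the relevant corner of the crossing configuration) is forced to intersect $\wit(a,b)$, and in the second case the corner playing the role of $b$ is forced to intersect $\wit(c,d)$. Either way, the witness would have selected a \emph{closer} red--blue corner pair than the one defining the crossing arc, contradicting the minimality in the definition of $E'$.

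The main obstacle I anticipate is \textbf{bookkeeping the geometry of the mixed arc correctly}, rather than any genuinely new idea. Two subtleties need care: first, since the mixed arc has its corner $\cor(\wit(a,b))$ sitting on the diagonal where the upper and lower pieces meet, I must verify that the crossing with the top arc really is confined to the upper half-circle $C_1$ and cannot sneak in near the diagonal at the transition point; second, the minimality argument uses that $\wit(a,b)$ is intersected by both $\pat(a)$ and $\pat(\cdot)$ from the \emph{left}, so I must check that whichever corner I force to intersect $\wit(a,b)$ yields an admissible (closer) pair with the correct one of the two colours $\mathcal{B}, \mathcal{R}$. Once the $x$-coordinate inequalities are laid out as above, these checks reduce to the same ``the nearest intersecting red--blue pair would have been chosen instead'' reasoning used throughout, so the contradiction follows in each case. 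The remaining crossing possibilities --- a mixed arc against a down arc, and a mixed arc against another mixed arc --- are handled by the symmetric argument on the lower piece via $C_2$ and by combining both pieces, respectively, and are deferred to the subsequent lemmas.
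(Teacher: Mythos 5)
Your proposal is correct and takes essentially the same route as the paper: the paper's proof likewise confines any crossing to the portion of the mixed arc above the diagonal, pins down the possible interleavings of $\cor(a)$, $\cor(c)$, $\cor(\wit(a,b))$, $\cor(d)$, and in each configuration forces one of the four \elf-frames to intersect the other arc's witness, contradicting the minimum-distance choice of the pair for that witness exactly as in Lemmas~\ref{lem:noTopEdgeCrossing} and~\ref{lem:noDownEdgeCrossing}. The only difference is bookkeeping: the paper splits first on whether $x(c)<x(a)$ or $x(c)>x(a)$ and only then (in the former case) on the relative order of $x(\cor(\wit(a,b)))$ and $x(\cor(\wit(c,d)))$, whereas you case directly on the witness positions, with one of those cases being vacuous in the second interleaving since $x(\cor(\wit(c,d)))\ge x(\cor(d))>x(\cor(\wit(a,b)))$ holds automatically there.
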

\begin{proof}
Suppose for a contradiction that a mixed arc $(a,b)$ crosses a top arc $(c,d)$ and assume w.l.o.g. that $x(a)<x(b)$ and $x(c)<x(d)$. If $x(c)<x(a)$, then we must have $x(a)<x(d)<x(\cor(\wit(a,b)))$ --- otherwise, the two arcs would not cross. First, we know that $x(\cor(\wit(c,d)))\ge x(d)$. If $x(\cor(\wit(c,d)))<x(\cor(\wit(a,b)))$, then $\pat(a)$ intersects $\wit(c,d)$ before intersecting $\wit(a,b)$, which is a contradiction because in that case we should have added either the top arc $(c,a)$ or the top arc $(a,d)$ to $E'$ corresponding to $\wit(c,d)$ instead of $(c,d)$; see Figure~\ref{fig:mixedTop}(a). If $x(\cor(\wit(c,d)))>x(\cor(\wit(a,b)))$, then $\pat(d)$ intersects $\wit(a,b)$ before intersecting $\wit(c,d)$, which is again a contradiction as we should have added either the top arc $(a,d)$ or the mixed arc $(d,b)$ to $E'$ corresponding to $\wit(a,b)$ instead of $(a,b)$. See Figure~\ref{fig:mixedTop}(b).

Now, suppose that $x(c)>x(a)$. Then, we must have $x(a)<x(c)<x(\cor(\wit(a,b)))<x(d)$ as the two arcs would not intersect otherwise. Since $(c,d)$ is a top arc, we must have $x(\cor(\wit(c,d)))\ge x(d)$ and so $\pat(c)$ would intersect $\wit(a,b)$ before intersecting $\wit(c,d)$; see Figure~\ref{fig:mixedTop}(c). This is a contradiction because we should have added either the top arc $(a,c)$ or the mixed arc $(c,b)$ to $E'$ corresponding to $\wit(a,b)$ instead of $(a,b)$.
\end{proof}

\begin{lemma}
\label{lem:noMixednoDownCrossing}
No mixed arc crosses a down arc in $H$.
\end{lemma}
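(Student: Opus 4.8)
The plan is to reduce the claim to an interaction taking place entirely below the diagonal $D$, and then re-use the ``closest-corner'' exchange argument that drives Lemmas~\ref{lem:noTopEdgeCrossing}--\ref{lem:noMixednoTopCrossing}. Let $(a,b)$ be a mixed arc with $x(a)<x(b)$, so that $\pat(a)$ intersects $\wit(a,b)$ from the left and $\pat(b)$ from below, and write $\mu:=x(\cor(\wit(a,b)))$, where $x(a)<\mu<x(b)$. Its first half (above $D$) spans the $x$-interval $[x(a),\mu]$ and its second half (below $D$) spans $[\mu,x(b)]$. Let $(c,d)$ be a down arc with $x(c)<x(d)$; since both $\pat(c)$ and $\pat(d)$ intersect $\wit(c,d)$ from below, we have $\nu:=x(\cor(\wit(c,d)))\le x(c)$, and the whole arc lies below $D$ over $[x(c),x(d)]$. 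Because the first half of the mixed arc lies strictly above $D$ while the down arc lies below $D$, any crossing must involve only the second (below-$D$) half of the mixed arc. As in Lemmas~\ref{lem:noTopEdgeCrossing} and~\ref{lem:noDownEdgeCrossing}, two arcs drawn on the same side of $D$ cross only if their $x$-intervals interleave, so I would split into the two cases $x(c)<\mu<x(d)<x(b)$ and $\mu<x(c)<x(b)<x(d)$.

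The workhorse throughout is the monotonicity already used implicitly in the earlier lemmas: if a ``left'' frame meets a ``right'' frame, then the left frame's horizontal arm reaches (in $x$) at least to the right frame's corner, and the right frame's vertical arm reaches (in $y$) at least to the left frame's corner height; moreover on $D$ a corner with smaller $x$-coordinate has larger height. In particular $\wit(a,b)$ is the left frame of the below-half, so its horizontal arm reaches at least $x(b)$ and $\pat(b)$'s vertical arm reaches at least the height of $\cor(\wit(a,b))$; likewise $\wit(c,d)$ is the left frame of the down arc, so its horizontal arm reaches at least $x(d)$ and each of $\pat(c),\pat(d)$ has a vertical arm reaching at least the height of $\cor(\wit(c,d))$.

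In the first case, $x(c)<\mu<x(d)<x(b)$ forces $\nu\le x(c)<\mu$, hence $\nu<\mu$. I would then show that $\pat(d)$ intersects $\wit(a,b)$: the horizontal arm of $\wit(a,b)$ reaches at least $x(b)>x(d)$, while $\pat(d)$'s vertical arm reaches the height of $\cor(\wit(c,d))$, which (since $\nu<\mu$) is at least the height of $\cor(\wit(a,b))$. As $x(a)<x(d)<x(b)$ and all corners lie on $D$, this forces $(a,d)$ or $(d,b)$ to be a strictly closer pair at the witness $\wit(a,b)$, and it should have been added to $E'$ in place of $(a,b)$ --- a contradiction. In the second case, $\mu<x(c)<x(b)<x(d)$, I would split on $\nu\le\mu$ versus $\mu<\nu$. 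If $\nu\le\mu$, the same reasoning shows $\pat(c)$ intersects $\wit(a,b)$ (its horizontal arm reaches at least $x(b)>x(c)$, and $\pat(c)$'s vertical arm reaches at least the height of $\cor(\wit(a,b))$ because $\nu\le\mu$); since $x(a)<x(c)<x(b)$, this again yields a closer pair at $\wit(a,b)$. If instead $\mu<\nu$, I would argue symmetrically that $\pat(b)$ intersects $\wit(c,d)$ (its horizontal arm reaches at least $x(d)>x(b)$, and $\pat(b)$'s vertical arm reaches at least the height of $\cor(\wit(c,d))$ because $\mu<\nu$); since $x(c)<x(b)<x(d)$, this produces a closer pair at $\wit(c,d)$, contradicting the choice of $(c,d)$.

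The main obstacle I expect is the bookkeeping in the second case: unlike the top-arc lemma, the down arc's witness can sit either to the left or to the right of $\wit(a,b)$, so the direction of the height comparison between $\cor(\wit(a,b))$ and $\cor(\wit(c,d))$ flips accordingly. The sub-case split $\nu\le\mu$ versus $\mu<\nu$ is precisely what decides which endpoint ($\pat(c)$ or $\pat(b)$) pierces which witness, and getting this split right --- together with checking that the relevant vertical arm is tall enough in each sub-case --- is the only delicate point; the remaining reach comparisons and the closest-pair contradiction are routine given the monotonicity above.
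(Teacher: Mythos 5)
Your proposal is correct and follows essentially the same argument as the paper: the same two interleaving cases (your Case A is the paper's $x(\cor(d))<x(\cor(b))$ case and your Case B its $x(\cor(d))>x(\cor(b))$ case, with the same split on which witness lies further left), the same deductions that $\pat(d)$, $\pat(c)$, or $\pat(b)$ must intersect the other arc's witness, and the same closest-corner-pair contradiction. The only difference is that you spell out the arm-reach geometry the paper leaves to its figures, and your split $\nu\le\mu$ versus $\mu<\nu$ cleanly covers the boundary case $\nu=\mu$ that the paper's strict inequalities gloss over.
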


\begin{proof}
 Suppose for a contradiction that a mixed arc $(a,b)$ crosses a down arc $(c,d)$ and assume w.l.o.g. that $x(\cor(a))<x(\cor(b))$ and $x(\cor(c))<x(\cor(d))$. First, suppose that $x(\cor(d))>x(\cor(b))$. Then, we must have $x(\cor(\wit(a,b)))<x(\cor(c))<x(\cor(b))$ --- otherwise, the two arcs would not cross. Since $(c,d)$ is a down arc, we know that $x(\cor(\wit(c,d)))\le x(\cor(c))$. If $x(\cor(\wit(c,d)))<x(\cor(\wit(a,b)))$, then $\pat(c)$ must intersect $\wit(a,b)$, which is a contradiction because we should have added either $(a,c)$ or $(c,b)$ to $E'$ corresponding to the witness $\wit(a,b)$, instead of the arc $(a,b)$; see Figure~\ref{fig:mixedDown}(a). If $x(\cor(\wit(c,d)))>x(\cor(\wit(a,b)))$, then $\pat(b)$ must intersect $\wit(c,d)$, which is a contradiction as in that case we should have added $(c,b)$ or $(b,d)$ to $E'$ corresponding to the witness $\wit(c,d)$, instead of the arc $(c,d)$; see Figure~\ref{fig:mixedDown}(b).

\begin{figure}[t]
\centering
\includegraphics[width=0.95\textwidth]{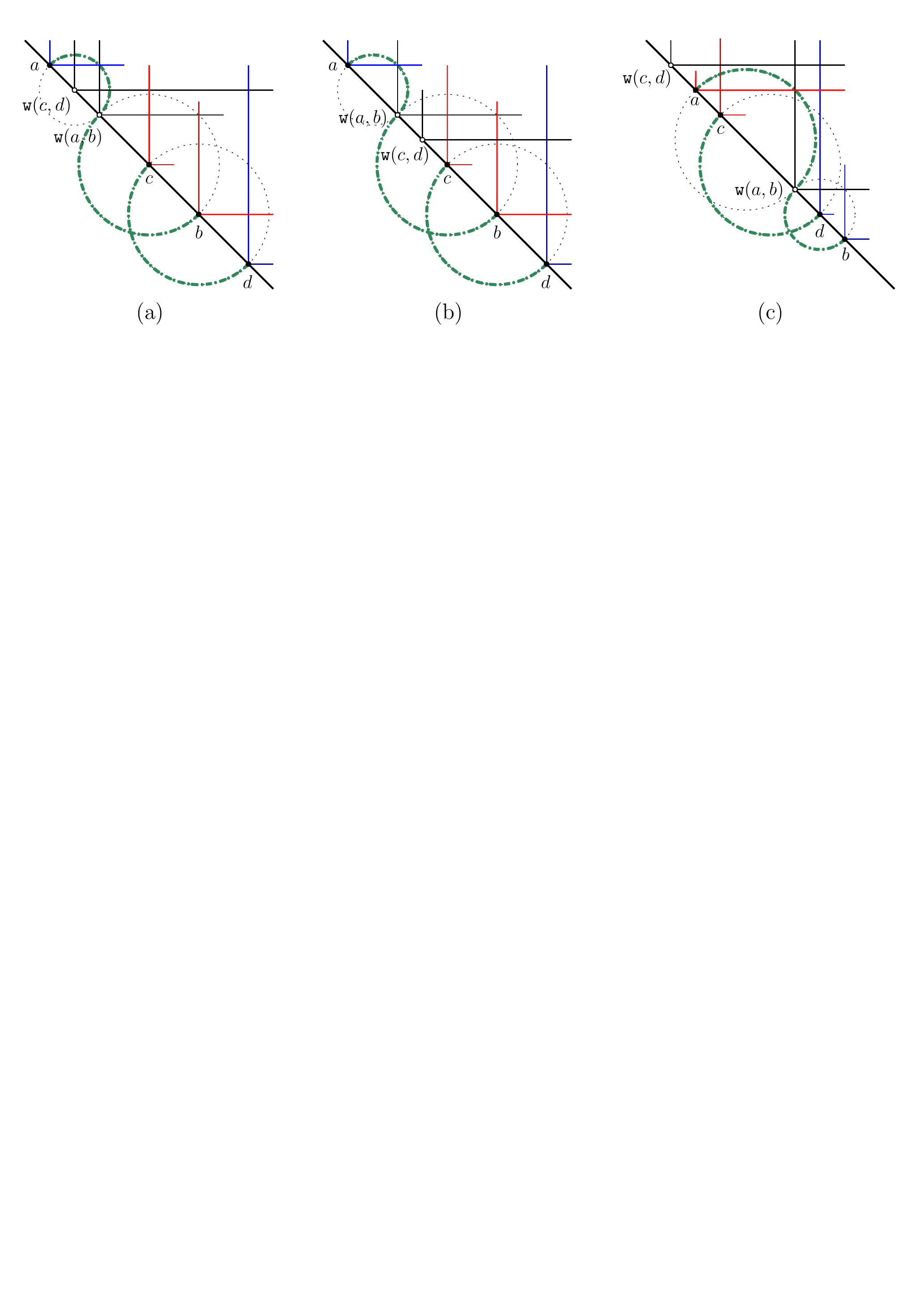}
\caption{An illustration in supporting the proof of Lemma~\ref{lem:noMixednoDownCrossing}.}
\label{fig:mixedDown}
\end{figure}

Now, suppose that $x(\cor(d))<x(\cor(b))$. Then, we must have that $x(\cor(c))<x(\cor(\wit(a,b)))<x(\cor(d))$ as otherwise the two arcs would not cross; see Figure~\ref{fig:mixedDown}(c). Since $(c,d)$ is a down arc, we must have $x(\cor(\wit(c,d)))\le x(\cor(c))$ and so $\pat(d)$ must intersect $\wit(a,b)$ --- this is a contradiction because in that case we should have added $(a,d)$ or $(d,b)$ to $E'$ corresponding to the witness $\wit(a,b)$, instead of the arc $(a,b)$.
\end{proof}

\begin{lemma}
\label{lem:noTwoMixedCrossing}
No two mixed arcs cross each other in $H$.
\end{lemma}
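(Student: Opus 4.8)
The plan is to follow exactly the same case-analysis template used in Lemmas~\ref{lem:noTopEdgeCrossing}--\ref{lem:noMixednoDownCrossing}: assume for contradiction that two mixed arcs $(a,b)$ and $(c,d)$ cross, normalize with $x(\cor(a))<x(\cor(b))$ and $x(\cor(c))<x(\cor(d))$, and derive that some endpoint \elf-frame of one arc intersects the witness of the other arc at a point whose corner-distance is strictly smaller than the one that defined the crossing arc, contradicting the minimality rule by which $E'$ was built. Recall that each mixed arc is drawn as two half-circle pieces that meet at the witness corner: the first piece (from $\cor(a)$ to $\cor(\wit(a,b))$) lies above $D$, and the second piece (from $\cor(\wit(a,b))$ to $\cor(\cor(b))$) lies below $D$ --- I will use $w_1:=\wit(a,b)$ and $w_2:=\wit(c,d)$ as shorthand. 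The essential geometric fact is that the upper half-piece of $(a,b)$ spans exactly the $x$-interval $[x(\cor(a)),x(\cor(w_1))]$ above the diagonal and the lower half-piece spans $[x(\cor(w_1)),x(\cor(b))]$ below it; the same holds for $(c,d)$ with $w_2$. Since an above-diagonal piece can only cross another above-diagonal piece and a below-diagonal piece can only cross another below-diagonal piece, any crossing of the two mixed arcs reduces to a crossing of two upper half-pieces or two lower half-pieces.

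First I would dispose of the two ``like-side'' crossings. If the upper piece of $(a,b)$ crosses the upper piece of $(c,d)$, then both behave exactly like top arcs on the $x$-interval in question: I have a nesting/interleaving of the four $x$-coordinates $x(\cor(a)), x(\cor(c)), x(\cor(w_1)), x(\cor(w_2))$ forcing (as in Lemma~\ref{lem:noTopEdgeCrossing}) one of the inner endpoints---say $\pat(c)$---to intersect $w_1$, or $\pat(a)$ to intersect $w_2$, whichever witness corner lies further right. Because that intersection yields a same-side (left) witness whose relevant corner pair has strictly smaller Euclidean distance, we should have placed a \emph{top} arc (e.g. $(a,c)$ or $(c,b)$) corresponding to that witness instead of the mixed arc we actually recorded, which is the contradiction. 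The lower-piece case is symmetric, mirroring Lemma~\ref{lem:noDownEdgeCrossing}: both lower pieces act as down arcs, and the same minimality argument forces a \emph{down} arc to have been inserted in place of one of the mixed arcs.

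Next I would handle the ``mixed-side'' crossing, where (say) the upper piece of $(a,b)$ crosses the lower piece of $(c,d)$. This is the case I expect to be the main obstacle, because the two pieces live on opposite sides of $D$ yet still meet, which can only happen near the diagonal---that is, near the witness corners $\cor(w_1)$ and $\cor(w_2)$, both of which sit on $D$. I would argue that an above-diagonal arc and a below-diagonal arc can cross only at a point of $D$ itself, and a point of $D$ that lies on the upper piece of $(a,b)$ must be an endpoint $\cor(a)$ or $\cor(w_1)$, while a point on the lower piece of $(c,d)$ must be $\cor(w_2)$ or $\cor(d)$; since the four corners are distinct (recall $\mathcal B\cap\mathcal R=\emptyset$, so no two distinct arcs can share a corner unless they share a witness), no genuine crossing of two pieces on opposite sides of $D$ is possible except at a shared witness corner, which is excluded. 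Thus the only feasible crossings are the two like-side cases already resolved, completing the argument.

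I expect the bookkeeping of the $x$-coordinate orderings in the like-side cases to be routine but slightly heavier than in the earlier lemmas, since each mixed arc contributes \emph{three} relevant corners ($\cor(a)$, $\cor(w_1)$, $\cor(b)$ for one arc) rather than two, so I would restrict attention to just the two corners bounding the relevant half-piece and reuse the exact minimality contradictions of Lemmas~\ref{lem:noTopEdgeCrossing} and~\ref{lem:noDownEdgeCrossing} verbatim on that sub-interval, rather than re-deriving them. The genuinely delicate point, and the one I would write out most carefully, is the opposite-side claim: that no arc drawn strictly above $D$ can cross an arc drawn strictly below $D$ except on $D$, and that the only points of a half-piece lying on $D$ are its two endpoints. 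Once that observation is in place, the lemma follows and, together with the preceding four lemmas, establishes that $H$ is planar.
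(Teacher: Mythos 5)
Your proposal is correct and takes essentially the same route as the paper's own proof: the paper likewise reduces any crossing of two mixed arcs to either an above-diagonal crossing (handled by the argument of Lemma~\ref{lem:noTopEdgeCrossing}, forcing a strictly closer pair at $\wit(a,b)$ and contradicting minimality), a below-diagonal crossing (handled symmetrically via Lemma~\ref{lem:noDownEdgeCrossing} at $\wit(c,d)$), or a crossing on the diagonal itself, which it rules out because such a crossing would force the two witnesses to be coincident and hence the two arcs to be a single arc. The only cosmetic difference is that you phrase the on-diagonal case via distinctness of the half-pieces' endpoints on $D$, while the paper phrases it via coincident witnesses; both rest on the same observation that the interior of a mixed arc meets $D$ only at its witness corner.
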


\begin{proof}
 Let $(a,b)$ and $(c,d)$ be two mixed arcs in $H$. First, $(a,b)$ and $(c,d)$ do not cross each other on the diagonal $D$ because crossing on the diagonal means that $\wit(a,b)$ and $\wit(c,d)$ are coincident and so we should have had only one arc among $\{a,b,c,d\}$ in $H$ for both $\wit(a,b)$ and $\wit(c,d)$. As such, we assume in the following that $\wit(a,b)$ and $\wit(c,d)$ are not coincident.

If $(a,b)$ and $(c,d)$ cross each other, then they can have at most two crossings, and in case of two, one is above and the other is below the diagonal. We show that no such types of crossings are possible. First, suppose that $(a,b)$ and $(c,d)$ cross each other above the diagonal, and assume w.l.o.g. that $x(a)<x(c)<x(\wit(a,b))<x(\wit(c,d))$. See Figure~\ref{fig:twoMixed}(a). Then, it is easy to see that $\pat(c)$ would intersect $\wit(a,b)$ before intersecting $\wit(c,d)$ and so we should have added either the top arc $(a,c)$ or the mixed arc $(c,b)$ to $E'$ corresponding to $\wit(a,b)$ instead of $(a,b)$ (see Figure~\ref{fig:twoMixed}(a)). Now, suppose that $(a,b)$ and $(c,d)$ cross each other below the diagonal and assume w.l.o.g. that $x(\wit(a,b))<x(\wit(c,d))<x(b)<x(d)$. See Figure~\ref{fig:twoMixed}(b). Then $\pat(b)$ must intersect $\wit(c,d)$ before intersecting $\wit(a,b)$ and so we should have added either the down arc $(b,d)$ or the mixed arc $(c,b)$ to $E'$ corresponding to the witness $\wit(c,d)$ instead of the arc $(c,d)$ (see Figure~\ref{fig:twoMixed}(b)). This concludes the proof.

\begin{figure}[t]
\centering
\includegraphics[width=0.60\textwidth]{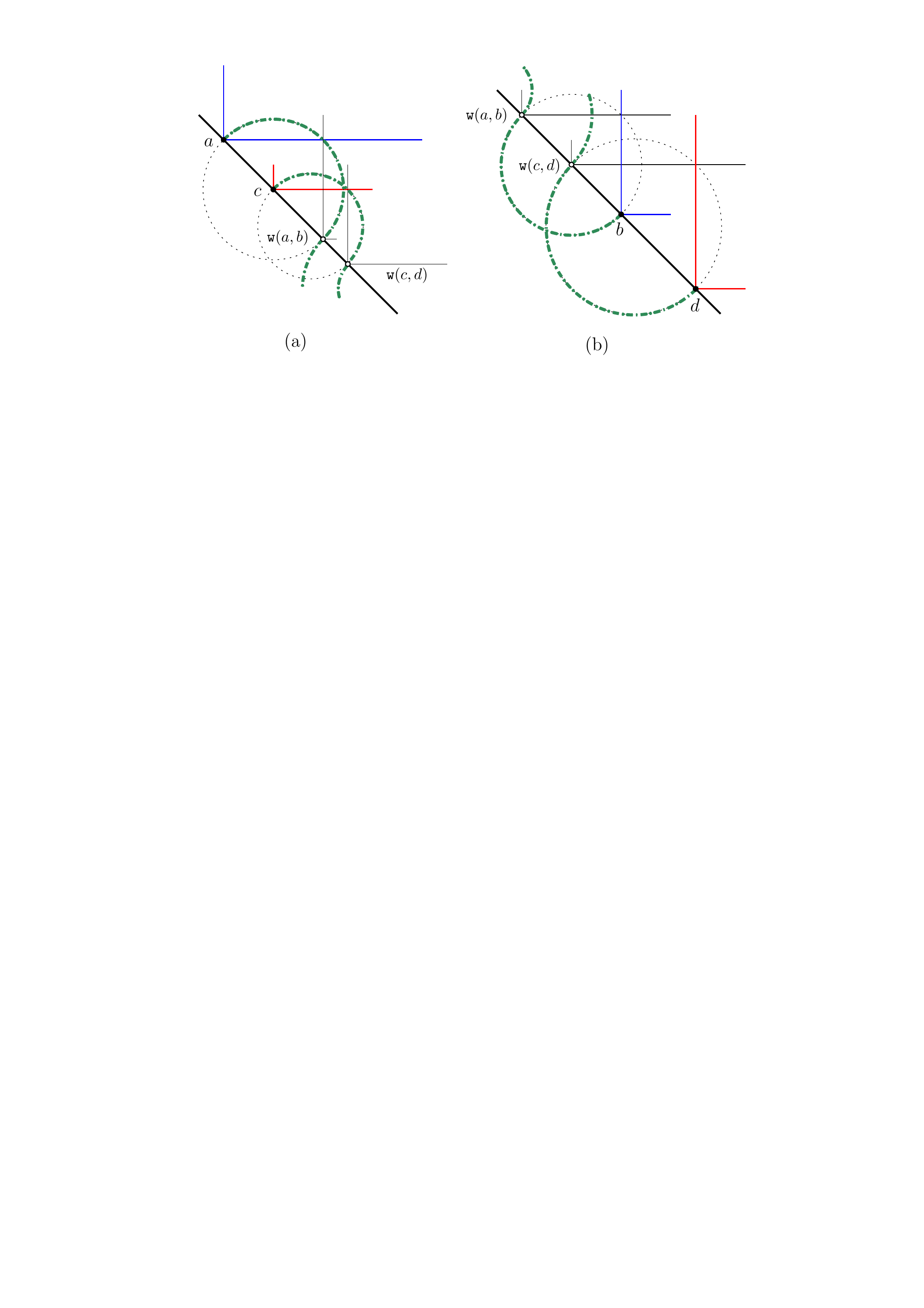}
\caption{An illustration in support of the proof of Lemma~\ref{lem:noTwoMixedCrossing}.}
\label{fig:twoMixed}
\end{figure}
\end{proof}

By Lemma~\ref{lem:noTopEdgeCrossing}---\ref{lem:noTwoMixedCrossing}, it follows that $H$ is a planar graph. The following lemma along with the planarity of $H$ gives a \ptas for the problem on $G$.
\begin{lemma}
\label{lem:localExchange}
Graph $H=(\mathcal{B}\cup\mathcal{R}, E)$ satisfies the local exchange property.
\end{lemma}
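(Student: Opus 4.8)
The plan is to verify the local exchange property directly against its definition: for any subset $\mathcal{B}'\subseteq\mathcal{B}$, the set $S:=(\mathcal{B}\setminus\mathcal{B}')\cup N_H(\mathcal{B}')$ must be a feasible solution, i.e., a dominating set of $G$. Since $\mathcal{B}$ is a dominating set, the only vertices whose domination could be jeopardized by removing $\mathcal{B}'$ are those dominated \emph{only} by vertices of $\mathcal{B}'$ and by no vertex of $\mathcal{B}\setminus\mathcal{B}'$. So the crux is to show that every such vertex $u$ is dominated by some vertex in $N_H(\mathcal{B}')$, the set of $H$-neighbours of $\mathcal{B}'$ (which all lie in $\mathcal{R}$). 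First I would fix such a vertex $u$, for which $\pat(u)$ intersects some $b\in\mathcal{B}'$ but no vertex of $\mathcal{B}\setminus\mathcal{B}'$.

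The key observation is that $u$ is itself a witness for some arc of $H$, so I can use the arc incident to $u$ to produce the required $\mathcal{R}$-vertex. Since $\mathcal{R}$ is an optimal (hence feasible) dominating set, $\pat(u)$ is intersected by at least one $r\in\mathcal{R}$. Now among all blue-red pairs whose corners are both intersected by $\pat(u)$, the edge-construction of $H$ added to $E'$ precisely the pair $(b_i,r_j)$ minimizing $\dis(\cor(b_i),\cor(r_j))$; call this the arc that $u$ witnesses, and let $(b_i,r_j)$ be that arc. By construction $\pat(u)$ intersects both $\pat(b_i)$ and $\pat(r_j)$. The plan is to argue that $b_i\in\mathcal{B}'$: since $u$ is intersected by a vertex of $\mathcal{B}'$ but by no vertex of $\mathcal{B}\setminus\mathcal{B}'$, and $b_i\in\mathcal{B}$ intersects $\pat(u)$, we must have $b_i\in\mathcal{B}'$. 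Consequently $r_j\in N_H(b_i)\subseteq N_H(\mathcal{B}')$, and because $\pat(r_j)$ intersects $\pat(u)$, the vertex $u$ is dominated by $r_j\in S$. This gives feasibility of $S$.

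The main obstacle I expect is handling the boundary cases in the witness argument cleanly, in particular confirming that such a $u$ genuinely acts as a witness for an arc of $H$ with a \emph{blue} endpoint in $\mathcal{B}'$. The subtlety is that the closest blue-red pair for $u$ might a priori involve a blue vertex different from any specific $b\in\mathcal{B}'$; but since \emph{every} blue vertex intersecting $\pat(u)$ lies in $\mathcal{B}'$ (as $u$ is not dominated by $\mathcal{B}\setminus\mathcal{B}'$), whichever blue vertex $b_i$ is chosen by the construction is automatically in $\mathcal{B}'$. I would also note the harmless edge case where $\pat(u)$ is intersected by no vertex of $\mathcal{B}\setminus\mathcal{B}'$ but still by some $\mathcal{B}\cup\mathcal{R}$ vertices — the pair selection is well-defined precisely because at least one blue (from $\mathcal{B}'$) and at least one red (from feasibility of $\mathcal{R}$) intersect $\pat(u)$, so the minimizing pair exists. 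Assembling these observations, every vertex of $G$ that loses its $\mathcal{B}$-dominator upon deletion of $\mathcal{B}'$ regains one in $N_H(\mathcal{B}')$, establishing the local exchange property and, together with the planarity of $H$ proved in Lemmas~\ref{lem:noTopEdgeCrossing}--\ref{lem:noTwoMixedCrossing} and Theorem~\ref{thm:LSgivesPTAS}, the desired \ptas.
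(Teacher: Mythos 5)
Your proof is correct and follows essentially the same route as the paper: both arguments rest on the observation that, by the very definition of $E'$, every vertex $u\in V$ witnesses an arc $(b_i,r_j)\in E'$ whose two endpoints both intersect $\pat(u)$ (using feasibility of $\mathcal{B}$ and $\mathcal{R}$ to guarantee the minimizing blue-red pair exists), from which the exchange property follows. The only difference is that you spell out the final case analysis (whether the blue endpoint lies in $\mathcal{B}'$ or not), which the paper leaves implicit in the phrase ``it is sufficient to show''; also note the minor wording slip --- the pairs range over blue-red frames that intersect $\pat(u)$, not ``whose corners are intersected by $\pat(u)$'' --- but your subsequent use is the correct one.
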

\begin{proof}
To prove the lemma, it is sufficient to show that for any vertex $u\in V$, there are $b_i\in\mathcal{B}$ and $r_j\in\mathcal{R}$ such that both $\pat(b_i)$ and $\pat(r_j)$ intersect $\pat(u)$ and $(b_i,r_j)\in E'$. Take any vertex $u\in V$ and let $S\subseteq \mathcal{B}\cup\mathcal{R}$ be the set of all \elf-frames that intersect $\pat(u)$. Notice that $S\cap\mathcal{B}\neq\emptyset$ and $S\cap\mathcal{R}\neq\emptyset$ because each of $\mathcal{B}$ and $\mathcal{R}$ is a feasible solution to the \mds~problem on $G$. Now, consider $b\in S\cap\mathcal{B}$ and $r\in S\cap\mathcal{R}$ for which $\dis(\cor(b),\cor(r))$ is minimum over all such pairs $(b,r)$. We know by definition of $H$ that $(b, r)\in E'$, which proves the lemma.
\end{proof}

\subsection{Proof of Theorem~\ref{thm:twoPlusEpsilon}}
\label{subsec:twoSided}
We are now ready to prove Theorem~\ref{thm:twoPlusEpsilon}. Here, the rectangles are anchored at the diagonal from both sides; let $G=(V,E)$ denote the corresponding graph.
\begin{proof}[Proof of Theorem~\ref{thm:twoPlusEpsilon}]
Let $\{X, Y\}$ be a partition of the vertices of $G$ such that the \elf-frames corresponding to vertices in $X$ (resp. $Y$) are anchored at $D$ from above (resp. below). By abusing notation, we refer to these two sets of \elf-frames also as $X$ and $Y$. For any $\epsilon>0$, set $\epsilon':=\epsilon/2$. We apply the \ptas of Section~\ref{subsec:ptas} with parameter $c/\epsilon'$ to the \elf-frames in $X$ and $Y$ independently, and let $S_X$ and $S_Y$ be the solutions returned by the algorithm, respectively. We return $S:=S_X\cup S_Y$ as the final solution. Let $\opt$, $\optX$ and $\optY$ denote an optimal solution for the \mds~problem on the \elf-frames in $X\cup Y$ (i.e., graph $G$), $X$ and $Y$, respectively.

Consider the solution $\opt$. Let $S\subseteq \opt$ be a minimum size set of \elf-frames that dominates all the \elf-frames in $X$. If there is an \elf-frame $P\in S$ that is in $Y$, then $P$ can only dominate those \elf-frames in $X$ that are anchored at the diagonal at the same point as $P$, in which case we can replace $P$ by one of those \elf-frames from $X$. As such, there exists a set $S'\subseteq X$ of size at most $|S|\le |\opt|$ that dominates the \elf-frames of $X$. It follows that $|\optX|\leq |\opt|$. Similarly, one can show that $|\optY|\leq |\opt|$. Now, by using the result from Section~\ref{subsec:ptas}, we have $|S_X|\leq (1+\epsilon')|\optX|$ and $|S_Y|\leq (1+\epsilon')|\optY|$. Then,
\[
|S|\leq |S_X|+|S_Y|\leq (1+\epsilon')|\optX|+(1+\epsilon')|\optY|\leq 2(1+\epsilon')|\opt|=(2+\epsilon)|\opt|,
\]
which completes the proof of the theorem.
\end{proof}

\subsection{Proof of Theorem~\ref{thm:apxDiagonal}}
\label{sec:apxDiagonal}
Recall that Theorem~\ref{thm:apxDiagonal} claims \apx-hardness of the problem on \elf-frames in the diagonal-intersecting case. We show a gap-preserving reduction from the \mds~problem on \emph{circle graphs}, which is known to be \apx-hard~\cite{DamianP06}. Recall that a graph is called a circle graph, if it is the intersection graph of chords of a circle. Take any circle graph $G=(V, E)$ with $n$ vertices, and consider a geometric representation of $G$. By a closer look at the \apx-hardness proof~\cite{DamianP06}, we can assume that no two chords share an endpoint; that is, there are exactly $2n$ distinct points on the circle determining the endpoints of the chords. Cut the circle at an arbitrary point $p$ and consider the ordering $M:=\langle p_1,p_2,\dots,p_{2n}\rangle$ of the endpoints of chords visited in counter-clockwise along the circle starting at $p$. Now, let $D$ be the diagonal line $y=(2n+1)-x$. Consider each endpoint $p_i$ (where $1\leq i\leq 2n$) in the order given by $M$. Then, we map each point $p_i$ to the point $((2n+1)-i, i)$ on $D$. Let $e:=(p_j, p_k)$ be a chord of $G$. Then, the \elf-frame corresponding to $e$ is the unique \elf-frame that lies below $D$ and connects the two points mapped on $D$ corresponding to $p_j$ and $p_k$; see Figure~\ref{fig:apxHardness} for an example. Let $G'=(V',E')$ be the graph corresponding to the constructed \elf-frames. Clearly, $D$ has slope -1 and the construction can be done in polynomial time. Moreover, since the $2n$ endpoints of chords of the input representation are pairwise distinct, the points mapped to the line $D$ are in general position; that is, no two such points have the same $x$- or the same $y$-coordinates. The following lemma completes the proof of Theorem~\ref{thm:apxDiagonal}.

\begin{figure}
\centering
\includegraphics[width=0.7\textwidth]{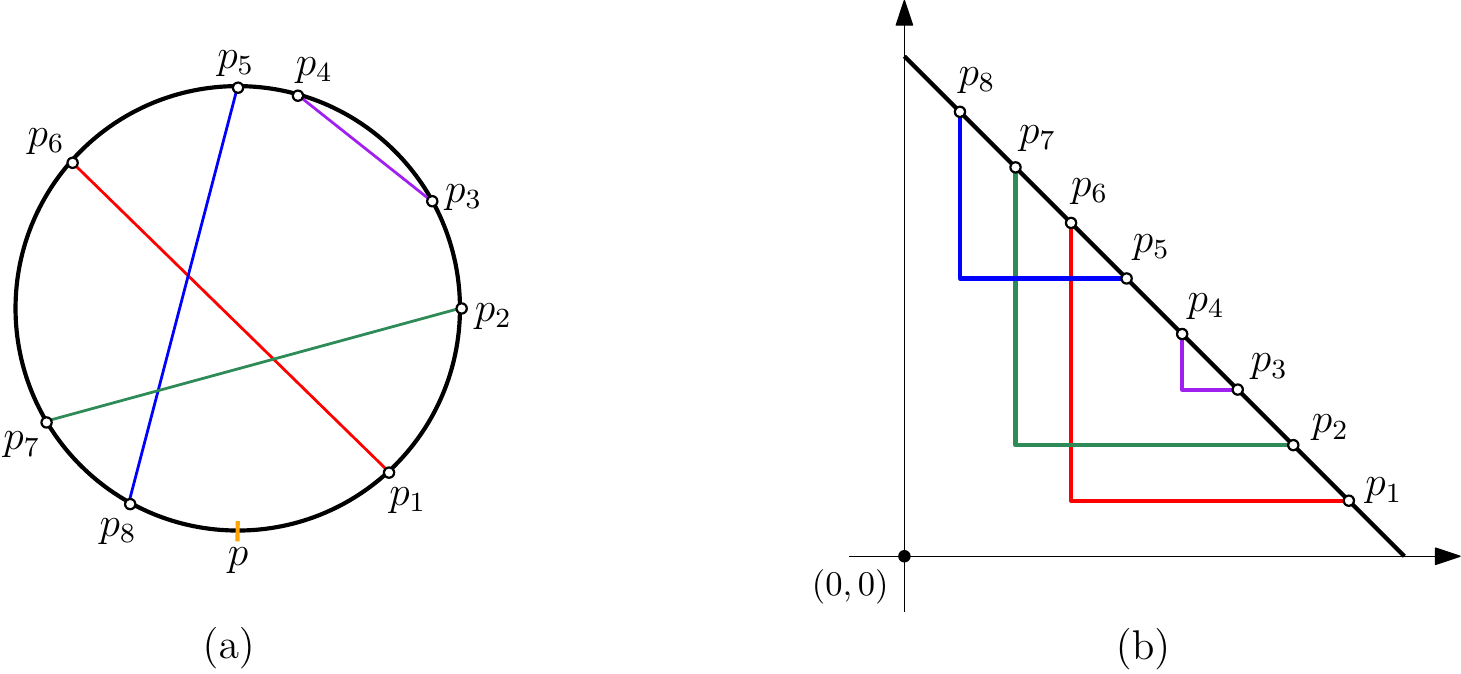}
\caption{An illustration supporting the proof of Theorem~\ref{thm:apxDiagonal}.}
\label{fig:apxHardness}
\end{figure}

\begin{lemma}
\label{lem:apxHardnessDiagonal}
$G$ has a dominating set of size $k$ if and only if $G'$ has a dominating set of size $k$.
\end{lemma}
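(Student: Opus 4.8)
The plan is to establish a bijection between the chords of the circle graph $G$ and the \elf-frames of $G'$ that preserves the adjacency (intersection) relation, so that the two intersection graphs $G$ and $G'$ become isomorphic. Once $G\cong G'$, a dominating set of size $k$ in one is literally a dominating set of size $k$ in the other, and the biconditional follows immediately. Since the vertex map $e \mapsto \pat(e)$ is already defined in the construction (and is clearly a bijection because the $2n$ endpoints are distinct and each chord maps to a unique \elf-frame below $D$), the entire content reduces to a single combinatorial claim: two chords $e=(p_j,p_k)$ and $f=(p_s,p_t)$ cross inside the circle if and only if the corresponding \elf-frames intersect.

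The key combinatorial fact I would use is the standard characterization of crossing chords via the cyclic order $M$: two chords cross if and only if their endpoints \emph{interleave} along the circle, i.e. (writing the four endpoints in the order they appear in $M$) exactly one endpoint of $f$ lies strictly between the two endpoints of $e$. So first I would reduce the problem to showing that the \elf-frame pair intersects precisely under this interleaving condition. Because the mapping $p_i \mapsto ((2n+1)-i,\, i)$ sends the linear order of $M$ to a strictly decreasing sequence of $x$-coordinates (equivalently, strictly increasing $y$-coordinates) along $D$, the cyclic/linear position of an endpoint is faithfully recorded by its coordinate on $D$. I would then analyze when two \elf-frames hanging below the diagonal intersect: each \elf-frame spans the $x$-interval (and, symmetrically, the $y$-interval) determined by its two endpoints on $D$, and a short case analysis on the relative positions of the four endpoints shows that the horizontal segment of one meets the vertical segment of the other exactly when the two spanning intervals are \emph{nested} on one axis and \emph{crossing} on the other — which translates back to the interleaving condition on $M$. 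The nesting case (one interval contained in the other) must correspond to non-crossing chords and must yield disjoint \elf-frames, while the properly-crossing intervals must correspond to interleaving endpoints and intersecting \elf-frames.

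The main obstacle I expect is the geometric case analysis for the \elf-frames: verifying that the \emph{overlapping-but-not-nested} configuration of intervals (where the endpoints interleave) forces the horizontal leg of the ``outer'' frame to pass through the vertical leg of the ``inner'' one, while both the \emph{disjoint} and the \emph{nested} configurations force the frames apart. This requires being careful about which corner (top/bottom, left/right) each \elf-frame's segments emanate from, since all frames here are of a single type and sit below $D$; the general-position assumption (no two endpoints share an $x$- or $y$-coordinate) is what rules out degenerate touchings and guarantees the equivalence is strict in both directions. Once this geometric equivalence is in hand, I would state the chord-crossing characterization, invoke the coordinate-order correspondence to translate it, and conclude $G\cong G'$, from which the equality of dominating-set sizes, and hence the lemma, is immediate.
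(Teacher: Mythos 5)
Your proposal is correct and takes essentially the same route as the paper: both establish that the construction is a graph isomorphism by observing that the mapping onto $D$ preserves the ordering of $M$, so two chords interleave (i.e., cross) exactly when the corresponding \elf-frames below $D$ intersect, and the equivalence of dominating sets of size $k$ is then immediate. One slip worth fixing when you carry out the case analysis: since both coordinates of the mapped points are monotone in the index along $M$, the $x$-spans and $y$-spans of two frames always have the same combinatorial type (both properly crossing, both nested, or both disjoint), so your phrase ``nested on one axis and crossing on the other'' describes a configuration that cannot occur --- the correct criterion, which you in fact state correctly in the rest of the proposal, is that the frames intersect exactly when the spans properly cross on both axes, equivalently when the endpoints interleave in $M$.
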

\begin{proof}
First, we show that there is a one-to-one correspondence between the vertices of $G$ and those of $G'$ such that $(i,j)\in E$ if and only if $(i,j)\in E'$. The one-to-one correspondence between the vertices of these two graphs is clear from the construction. Now, suppose that $(i, j)\in E$. This means that the endpoints of vertices $i$ and $j$ appear in $M$ as either $\langle p_i,p_j,p_{i'},p_{j'}\rangle$ or $\langle p_j,p_i,p_{j'},p_{i'}\rangle$, where $p_i$ and $p_{i'}$ with $i<i'$ (resp., $p_j$ and $p_{j'}$ with $j<j'$) correspond to the endpoints of $i$ (resp., $j$). By the mapping used in the construction, this ordering is preserved on $D$ (when going from $(2n+1,0)$ to $(0,2n+1)$ along $D$) and so the \elf-frames $\pat(i)$ and $\pat(j)$ intersect each other below $D$; that is, $(i,j)\in E'$. Conversely, if $(i,j)\notin E$, then their endpoints appear in $M$ as either $\langle p_i,p_j,p_{j'},p_{i'}\rangle$ or $\langle p_j,p_i,p_{i'},p_{j'}\rangle$, where $p_i$ and $p_{i'}$ with $i<i'$ (resp., $p_j$ and $p_{j'}$ with $j<j'$) correspond to the endpoints of $i$ (resp., $j$). Consequently, this ordering is preserved on $D$ by the mapping (when going from $(2n+1,0)$ to $(0,2n+1)$ along $D$) and so $\pat(i)$ and $\pat(j)$ do not intersect each other inside $D$; that is, $(i,j)\notin E'$. It is now straightforward to see that $G$ has a dominating set of size $k$ if and only if $G'$ has a dominating set of size $k$, which is clearly a gap-preserving reduction.
\end{proof}
\subparagraph*{Remark.} Using a similar reduction in the other direction one can show that, for every intersection graph $G$ of a set of diagonal-intersecting \elf-frames that intersect each other only below the diagonal, one can find a circle graph $G'$ such that $G$ has a dominating set of size $k$ if and only if $G'$ has a dominating set of size $k$. As one can obtain a $2+\epsilon$-approximation for \mds~on \emph{circle graphs} \cite{Damian-IordacheP02}, \mds~can be approximated within a factor of $2+\epsilon$ in this special case of diagonal-intersecting \elf-frames. Note that this is the version which we have just proved to be \apx-hard.

\section{Vertical-intersecting L-frames}
\label{sec:yCrossing}
In this section, we prove Theorem~\ref{thm:apxVertical} and~\ref{thm:xyCrossingAlgorithm}. To prove Theorem~\ref{thm:apxVertical}, we first show that the \mds~problem is \apx-hard even when each \elf-frame intersects a vertical line. The proof is essentially the same as that of Theorem~\ref{thm:apxDiagonal}, but 
here we replace the diagonal $D$ with a quarter of a circle and then extend the horizontal segment of each \elf-frame until it hits a vertical line that is placed far to the right. 
\begin{figure}[t]
\centering
\includegraphics[width=0.9\textwidth]{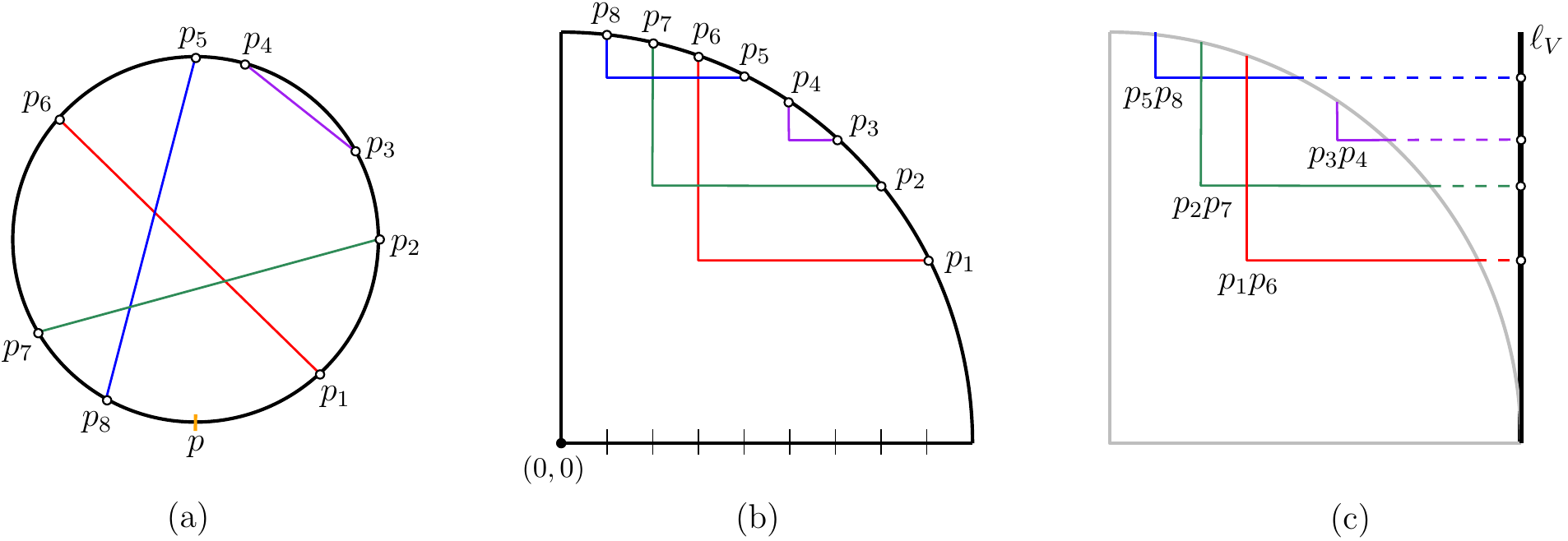}
\caption{An illustration supporting the proof of Theorem~\ref{thm:apxVertical}.}
\label{fig:circleHardness}
\end{figure}
Next we show the gap-preserving reduction from the \mds~problem on \emph{circle graphs}, which is known to be \apx-hard~\cite{DamianP06}. Recall that a graph is called a circle graph, if it is the intersection graph of chords of a circle. Take any circle graph $G=(V, E)$ with $n$ vertices, and consider a geometric representation of $G$. Recall that we can assume no two chords share an endpoint; that is, there are exactly $2n$ distinct points on the circle determining the endpoints of the chords. Cut the circle at an arbitrary point $p$ and consider the ordering $M:=\langle p_1,p_2,\dots,p_{2n}\rangle$ of the endpoints of chords visited in counter-clockwise along the circle starting at $p$. Now, let $C$ be a circle with radius one that is centred at the origin $(0, 0)$ of the Cartesian coordinate system. Consider each endpoint $p_i$ (where $1\leq i\leq 2n$) in the order given by $M$, and let $x_i:=i/(2n+1)$. Then, we map $p_i$ to the point $(x_i, \sqrt{1-x_i^2})$ on the top right quadrant of circle $C$. Let $e:=(p_j, p_k)$ be a chord of $G$. Then, the \elf-frame corresponding to $e$ is the unique \elf-frame that lies inside $C$ and connects the two points mapped on $C$ corresponding to $p_j$ and $p_k$; see Figure~\ref{fig:circleHardness}(b) for an illustration. Finally, we complete the construction by extending the horizontal segment of every \elf-frame to the right until it intersects the vertical line $x=1$. See Figure~\ref{fig:circleHardness}(c). Clearly, the construction can be done in polynomial time, and since the $2n$ endpoints of chords of the input representation are pairwise distinct, the points mapped to the circle $C$ are in general position; that is, no two such points have the same $x$- or the same $y$-coordinates. Now, one can easily prove a lemma similar to Lemma~\ref{lem:apxHardnessDiagonal} for this case as well, and so the \apx-hardness of the problem follows.

\begin{figure}[t]
\centering
\begin{subfigure}{.31\textwidth}
  \centering
  \includegraphics[width=.9\linewidth]{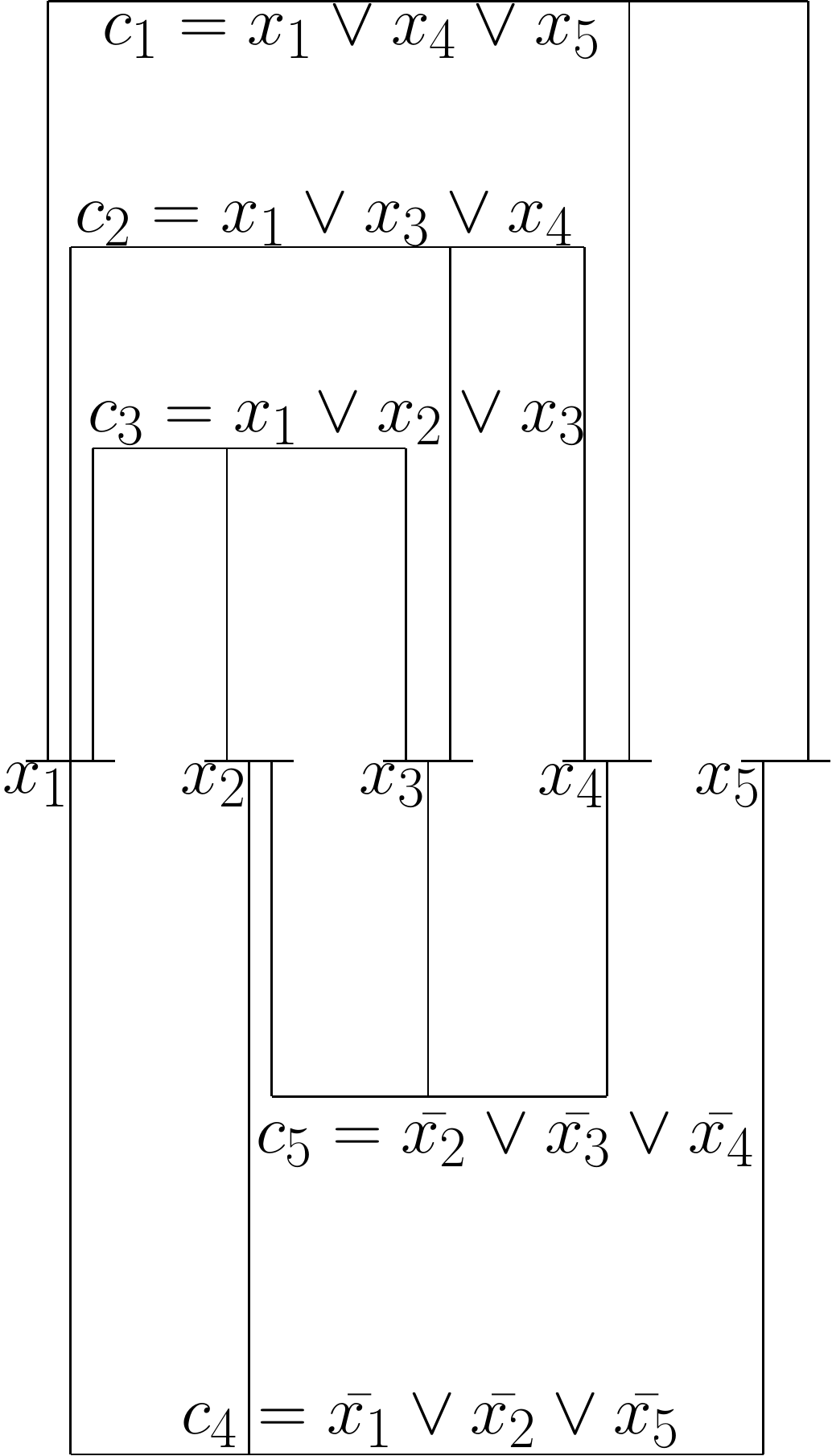}
  \label{fig:sub1}
\end{subfigure}
\hspace{5mm}
\begin{subfigure}{.55\textwidth}
  \centering
  \includegraphics[width=.95\linewidth]{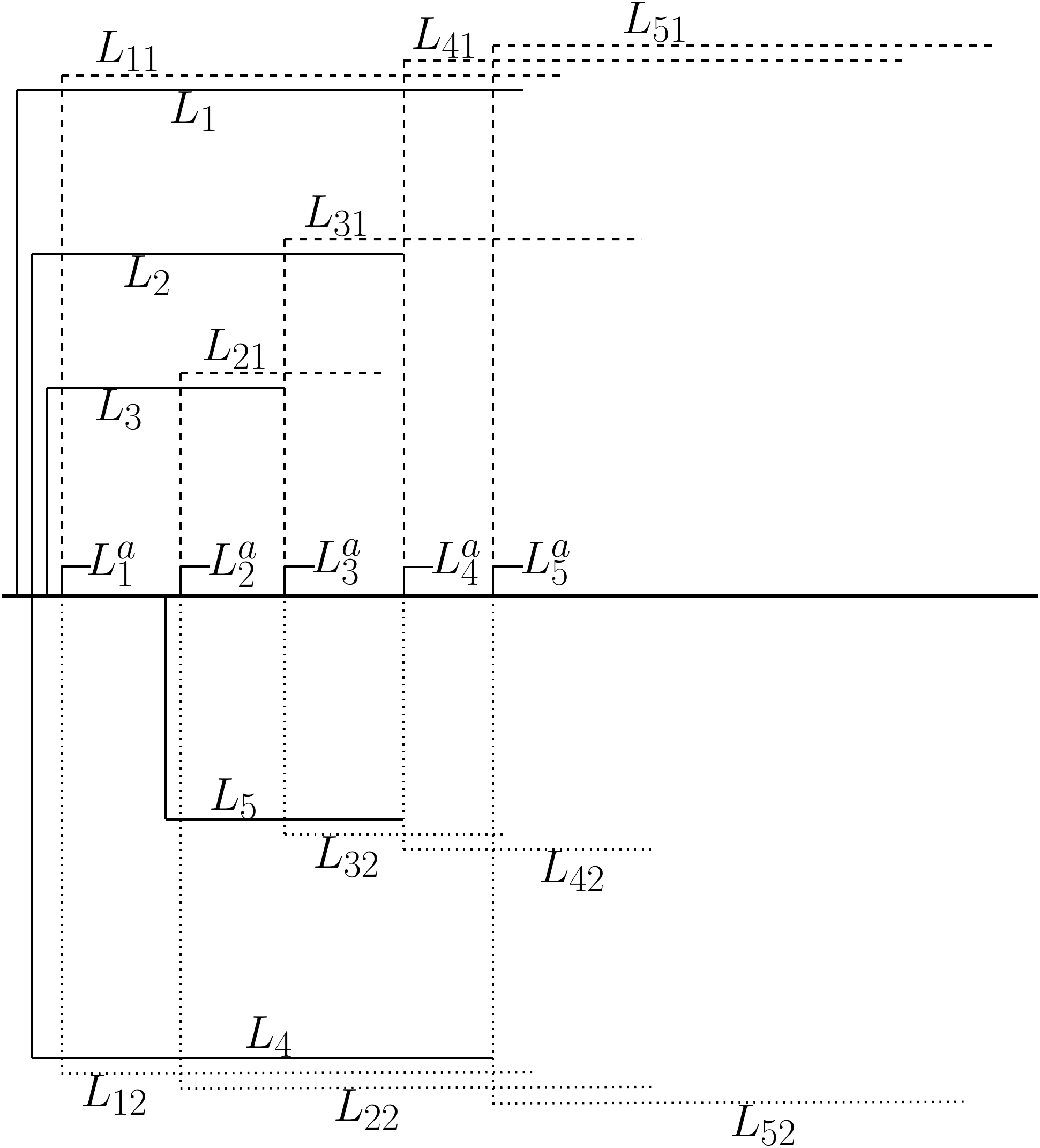}
  \label{fig:sub2}
\end{subfigure}
\caption{A drawing of an instance of Planar Monotone Rectilinear 3SAT (left) and the corresponding instance of the \mds~problem before rotation (right). The variable \elf-frames above (resp. below) the $x$-axis are shown in dashed (resp. dotted) style. The clause \elf-frames and the auxilliary \elf-frames are shown in normal style.}
\label{fig:nphardverttouch}
\end{figure}

To complete the proof of Theorem~\ref{thm:apxVertical}, we show that the \mds~problem is \np-hard on \elf-frames even if the horizontal and vertical segments of every \elf-frame have the same length. The reduction is from a variant of the 3SAT problem. For any 3SAT instance, one can define a bipartite graph on the clauses and the variables which we refer to as the incidence graph. For any clause $c_j$, if $c_j$ contains a literal corresponding to a variable $v_i$, the edge $(v_i,c_j)$ is added to the graph. A drawing of a planar incidence graph is called \textit{planar rectilinear} if it has the following properties. Each variable vertex is drawn as a horizontal segment on the $x$-axis and the clause vertices are drawn as horizontal segments above and below the $x$-axis. For each edge $(v_i,c_j)$, the horizontal segment corresponding to $c_j$ has a vertical connection to the horizontal segment corresponding to $v_i$. Moreover, this vertical connection does not intersect any other horizontal segments. See the left figure of Figure \ref{fig:nphardverttouch} for an example. An instance of the 3SAT problem is called \emph{monotone} if, for every clause in the instance, the literals of the clause are either all positive (called a \emph{positive} clause) or all negative (called a \emph{negative} clause). A planar rectilinear drawing of an incidence graph is called \emph{monotone} if it corresponds to a monotone 3SAT instance such that all the positive (resp., negative) clauses are drawn above (resp., below) the $x$-axis. In the Planar Monotone Rectilinear 3SAT problem the incidence graph of any instance has a planar rectilinear and monotone drawing. The problem is known to be \np-hard even when the drawing is given~\cite{BergK12}. Given an instance of the Planar Monotone Rectilinear 3SAT, we construct a set of \elf-frames such that the horizontal and vertical segments of each \elf-frame have the same length and they all intersect a vertical line.

Let $n$ be the number of variables. Due to the hierarchical structure of the clauses lying above and below the $x$-axis, respectively in the given drawing, one can modify the drawing in a way such that for each clause, the length of the horizontal segment corresponding to it is the same as the length of the vertical segments corresponding to it (see the left figure of Figure \ref{fig:nphardverttouch}). Let $T$ be the new drawing. We construct the instance $I$ of \mds~from $T$ in the following way. For each clause $c_j$, removing all the vertical connections except the leftmost one creates an \elf-frame. We add this \elf-frame (denoted by $L_j$) to our instance. For each variable $x_i$, we add two \elf-frames: one (denoted by $L_{i1}$) above and the other (denoted by $L_{i2}$) below the $x$-axis. The vertical segments of these two \elf-frames are the maximum length vertical connections corresponding to $x_i$ lying above and below $x$-axis, respectively. For the variable \elf-frames, their horizontal segments lie on the right of their vertical segment. Moreover, for any variable \elf-frame above (resp., below) the $x$-axis, the corner is the topmost (resp., bottommost) point of the vertical segment. The two \elf-frames corresponding to each variable are shifted accordingly so that they intersect at a point on the $x$-axis (see Figure \ref{fig:nphardverttouch}). Also, for each variable $x_i$, we add an auxiliary \elf-frame $L_i^a$ that intersects only the two \elf-frames corresponding to it as shown in Figure~\ref{fig:nphardverttouch}. Note that in the constructed instance, one endpoint of each \elf-frame lies on the $x$-axis. Now it could be the case that, for some $i, j$ and $k\in \{1,2\}$, an $L_{ik}$ intersects $L_j$ though $c_j$ does not contain any literal of $x_i$. To get rid of these intersections, we extend each $L_{ik}$ vertically and then horizontally (to maintain the symmetry) so that it does not intersect any $L_j$ such that $c_j$ does not contain $x_i$. Note that, as we do not modify the $L_j$'s, if $c_j$ contains the literal $x_i$ (resp. $\bar{x_i}$), $L_{i1}$ (resp. $L_{i2}$) still intersects $L_j$. Finally, we rotate the created instance clockwise by 90$^{\circ}$ with respect to the origin. Hence all the \elf-frames in the constructed instance now intersect the vertical line $x=0$ or the $y$-axis, and thus together they form an instance of the problem with vertical-intersecting \elf-frames. Lemma~\ref{lem:nphardsymmetric} below completes the proof of the second part of Theorem~\ref{thm:apxVertical}. We first need the following observation whose proof is immediate from the construction.
\begin{obs}
\label{obs:nphardverttouch}
For all $i,j$, $L_{i1}$ (resp., $L_{i2}$) intersects $L_j$ if and only if the clause $c_j$ contains the literal $x_i$ (resp., $\bar{x_i}$). 
\end{obs}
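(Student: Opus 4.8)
The plan is to prove the stated equivalence by treating the two implications separately, and in each case to lean on exactly two structural features of the construction: the \emph{monotone} placement of the frames relative to the $x$-axis (positive clauses and each $L_{i1}$ above it, negative clauses and each $L_{i2}$ below it), and the explicit extension (cleanup) step performed at the end. Throughout, I will use that in a monotone instance a positive clause contains only positive literals and a negative clause only negative ones, so for a positive $c_j$ the phrase ``$c_j$ contains a literal of $x_i$'' is synonymous with ``$c_j$ contains $x_i$'', and symmetrically for negative clauses and $\bar{x_i}$. By symmetry it then suffices to prove the statement for $L_{i1}$, positive clauses, and the literal $x_i$; the case of $L_{i2}$ is identical after reflecting across the $x$-axis.

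First I would settle the direction ``intersection implies containment''. Suppose $L_{i1}$ meets $L_j$. The frame $L_{i1}$ sits entirely above the $x$-axis, touching it only at the foot of its vertical segment, whereas a negative clause frame $L_j$ lies strictly below the axis except for a single endpoint on it; hence $L_{i1}$ cannot meet any negative clause frame, and $c_j$ must be positive. Now recall that after the basic placement we extended every $L_{ik}$, first vertically and then horizontally, precisely so that it meets no $L_j$ whose clause $c_j$ fails to contain a literal of $x_i$, and that the $L_j$'s themselves were never modified. Therefore a surviving intersection of $L_{i1}$ with $L_j$ forces $c_j$ to contain a literal of $x_i$, which—$c_j$ being positive—is $x_i$ itself.

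Next I would prove the converse. Assume $c_j$ contains $x_i$; then $c_j$ is positive, so both $L_{i1}$ and $L_j$ live above the $x$-axis. In the original rectilinear drawing the edge $(x_i,c_j)$ is realised by a vertical connection running from the horizontal bar of $c_j$ down to the bar of $x_i$ on the axis. Since the vertical segment of $L_{i1}$ is taken to be the \emph{longest} vertical connection at $x_i$, its height dominates that of $c_j$'s bar, and the horizontal bar of $c_j$ crosses the vertical segment of $L_{i1}$ in the basic placement, so the two frames intersect there. The cleanup step deletes only spurious intersections and, as noted in the construction, leaves every genuine intersection intact; hence $L_{i1}$ and $L_j$ still intersect in the final instance.

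The only step that is genuinely geometric rather than bookkeeping is the claim in the converse that containment already produces an intersection \emph{before} cleanup, i.e.\ that the bar of each positive clause containing $x_i$ crosses the vertical segment of $L_{i1}$. I expect this to be the main point to nail down, and I would establish it by invoking the nested (laminar) arrangement of the clause bars in a monotone rectilinear drawing: the connections of $x_i$ to its various positive clauses do not cross, so a shorter connection realising $c_j$ sits underneath the topmost connection realised by $L_{i1}$, forcing $c_j$'s bar to pass over that vertical segment below its apex height. Everything else is immediate from how the variable and clause frames were placed and from the stated deletion property of the extension step, so the observation follows.
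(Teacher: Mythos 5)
Your ``only if'' direction is sound and coincides with what the paper intends (the paper declares the observation ``immediate from the construction''): the separation of positive clauses and the frames $L_{i1}$ above the $x$-axis from negative clauses and the frames $L_{i2}$ below it, together with monotonicity, pins down the type of $c_j$, and the explicit extension step rules out intersections with any $L_j$ whose clause has no literal of $x_i$. The problem is in your converse, and precisely in the step you yourself single out as ``the main point to nail down.'' You claim that, before the cleanup, the bar of every positive clause $c_j$ containing $x_i$ crosses the vertical segment of $L_{i1}$, and you justify this from the laminar structure of the drawing: the connections of $x_i$ to its positive clauses do not cross, so a shorter connection ``sits underneath'' the longest one, ``forcing'' $c_j$'s bar to pass over that segment. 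This is exactly backwards. In a planar rectilinear drawing the connections of $x_i$ to distinct clauses land at \emph{distinct} points of $x_i$'s horizontal segment, and the defining property of such a drawing is that a vertical connection intersects \emph{no} horizontal segment other than the two it joins. The vertical segment of $L_{i1}$ is the connection from $x_i$ to its highest positive clause $c_{\mathrm{top}}$; hence planarity \emph{forbids} the bar of any other clause $c_j \ni x_i$ (which attaches to its own connection, at a different $x$-coordinate) from crossing it. Indeed, in the picture you describe, $L_j$ and $L_{i1}$ would be \emph{disjoint} for every positive clause $c_j \ni x_i$ other than $c_{\mathrm{top}}$: the bar of $L_j$ cannot cross the line of $L_{i1}$'s vertical segment, and $L_j$'s leftmost connection tops out at height $y_j$, strictly below the horizontal segment of $L_{i1}$. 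So your argument does not merely have a hole; the geometric fact it rests on is false in the configuration you are reasoning about, and would make the ``if'' direction fail.

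What actually makes the converse true is not planarity of the given drawing but the way the paper \emph{builds} the instance: the drawing is modified and the variable frames are placed and shifted (see the right part of Figure~\ref{fig:nphardverttouch}) precisely so that, for every clause $c_j$ containing $x_i$, the frame $L_j$ meets the vertical segment of $L_{i1}$ --- that is, a part of $L_{i1}$ that the later extension step never alters (the extension only raises the corner and lengthens the segments, and the $L_j$'s are never modified). The paper's proof of the observation is then genuinely ``immediate from the construction'': genuine intersections are installed and preserved by construction, and spurious ones are removed by construction. Your converse should cite these two construction facts; as written, the derivation of the crossing from the nesting of the connections is invalid.
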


\begin{lemma}
\label{lem:nphardsymmetric}
The instance $I$ has a dominating set of size $n$ if and only if the 3SAT instance is satisfiable.
\end{lemma}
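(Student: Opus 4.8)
The plan is to prove both directions of the biconditional by exploiting the ``variable gadget'' structure formed by the triple $\{L_{i1}, L_{i2}, L_i^a\}$ attached to each variable $x_i$, together with Observation~\ref{obs:nphardverttouch}, which exactly characterises when a variable \elf-frame dominates a clause \elf-frame.

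For the forward direction, suppose the 3SAT instance has a satisfying assignment $\phi$. I would build a dominating set $D$ of size $n$ by selecting, for each variable $x_i$, the \elf-frame $L_{i1}$ if $\phi(x_i)$ is true and $L_{i2}$ otherwise. To verify feasibility, note that $L_{i1}$ and $L_{i2}$ intersect (at a point on the vertical line), so whichever of the two is chosen dominates the other as well as the auxiliary \elf-frame $L_i^a$, which meets only these two; hence all variable and auxiliary \elf-frames are dominated. Each clause \elf-frame $L_j$ is dominated because $\phi$ satisfies $c_j$: some literal of $c_j$ is true, and by Observation~\ref{obs:nphardverttouch} the corresponding selected variable \elf-frame intersects $L_j$. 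Thus $D$ is a dominating set of size exactly $n$.

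For the reverse direction, the first step is a structural argument pinning down the shape of any size-$n$ dominating set $D$. Since $L_i^a$ intersects only $L_{i1}$ and $L_{i2}$, its closed neighbourhood is $\{L_i^a, L_{i1}, L_{i2}\}$, and these $n$ closed neighbourhoods are pairwise disjoint. To dominate every $L_i^a$, $D$ must therefore contain at least one \elf-frame from each such triple; with $|D| = n$ this forces $D$ to contain exactly one \elf-frame from each triple (and, in particular, no clause \elf-frame). I then read off an assignment $\phi$ from $D$: set $x_i$ true if $L_{i1}\in D$, false if $L_{i2}\in D$, and arbitrarily (say true) if $L_i^a\in D$.

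It remains to argue $\phi$ satisfies every clause, which is where the construction does its work. Fix a clause $c_j$. Its \elf-frame $L_j$ is not in $D$, so it must be dominated by some selected neighbour. Its only possible selected neighbours are variable \elf-frames, since $L_i^a$ is never adjacent to a clause \elf-frame; by Observation~\ref{obs:nphardverttouch}, a selected $L_{i1}$ (resp.\ $L_{i2}$) can dominate $L_j$ only if $c_j$ contains the literal $x_i$ (resp.\ $\bar{x_i}$). Hence some literal of $c_j$ agrees with $\phi$, so $c_j$ is satisfied, and $\phi$ satisfies the whole instance. The main obstacle is the reverse direction: making the lower bound on $|D|$ tight enough to force the one-per-gadget structure (via the disjoint auxiliary neighbourhoods) and ruling out any ``illegitimate'' domination of a clause \elf-frame. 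Both rely on the auxiliary \elf-frames combined with Observation~\ref{obs:nphardverttouch}, so I would take care that the construction indeed leaves each $L_i^a$ with no clause-frame neighbour and that a minimum dominating set never contains a clause \elf-frame.
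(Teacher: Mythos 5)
Your proof is correct and follows essentially the same approach as the paper's: both directions use the disjoint gadgets $\{L_{i1},L_{i2},L_i^a\}$ to force exactly one selection per variable (hence no clause \elf-frame in $D$), and Observation~\ref{obs:nphardverttouch} to translate between clause domination and literal satisfaction. Your explicit handling of the case $L_i^a\in D$ (assigning $x_i$ arbitrarily) is an immaterial refinement of the paper's convention of setting $x_i$ false whenever $L_{i1}\notin D$.
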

\begin{proof}
Suppose $I$ has a dominating set $D$ of size $n$. We construct a satisfiable assignment in the following way. Consider any variable $x_i$. As $L_i^a$ intersects only $L_{i1}$ and $L_{i2}$, one of these three \elf-frames must be in the dominating set. The size of $D$ being $n$, only one of the above mentioned three \elf-frames can be in $D$ and no clause \elf-frame can be in $D$. If $L_{i1}$ is in $D$, we set $x_i$ to be true. Otherwise, we set $x_i$ to be false. Now consider any clause $c_j$. Let $x_i$ be a variable such that a \elf-frame corresponding to $x_i$ is in $D$ that intersects $L_j$. Note that there exists such an $x_i$. Now if $L_{i1}$ is the chosen \elf-frame that dominates $L_j$, then $L_{i1}$ is in $D$ and by Observation~\ref{obs:nphardverttouch} $c_j$ must be a positive clause. It follows that, we have set $x_i$ to be true and thus $c_j$ is satisfied. Similarly, one can show that if $L_{i2}$ is the chosen \elf-frame that dominates $L_j$, then also $c_j$ is satisfied.

Now, suppose that we are given a satisfiable assignment of the 3SAT instance. We construct a dominating set in the following way. For any variable $x_i$, if $x_i$ is set to be true, then we select $L_{i1}$. If $x_i$ is set to be false, then we select $L_{i2}$. We claim that the selected \elf-frames form a dominating set. It is easy to see that the selected \elf-frames dominate the variable and auxiliary \elf-frames. Now, consider any clause \elf-frame $L_j$. There must be a variable $x_i$ that satisfies $c_j$. If $c_j$ is positive, the literal $x_i$ must be set to true and we select $L_{i1}$. By Observation~\ref{obs:nphardverttouch}, $L_{i1}$ intersects $L_j$, and thus $L_j$ is being dominated. If $c_j$ is negative, one can similarly show that $L_j$ is also being dominated.
\end{proof}

We now prove Theorem~\ref{thm:xyCrossingAlgorithm}. To this end, we show that the intersection graph of \elf-frames that inersect a vertical and a horizontal line is a permutation graph and so \mds~is linear-time solvable on such a graph~\cite{ChaoHL00}. Geometrically, a graph $G$ is a permutation graph if there are two embeddings of its vertices on two parallel lines such that, when connecting a vertex from the first line to the same vertex on the other line using a line segment, the edge set of the graph is realized exactly by the intersections of those segments. That is, two vertices appear in different order on the parallel lines (and, so their line segments intersect) whenever they are adjacent in the graph. By taking the two orderings in which the endpoints of the \elf-frames in the input graph intersect the two lines, we can construct the geometric representation of a permutation graph having the same edges. Hence, Theorem~\ref{thm:xyCrossingAlgorithm} follows.

\section{Edge Intersection Model}
\label{sec:edgeIntersection}
In this section, we show our results for the edge intersection model~\cite{GolumbicLS09}.

\subparagraph{Proof of Theorem~\ref{thm:apxEPG}.} We show a reduction from the Vertex Cover problem. In Vertex Cover, we are given a graph $G=(V,E)$ with $n$ vertices and the goal is to find a subset $V'\subseteq V$ such that for any edge $(v_i,v_j)\in E$, $V'\cap\{v_i,v_j\}\ne \emptyset$. Let $VC(G)$ be the size of any minimum size vertex cover of $G$. As noted in \cite{AwasthiCKS15}, from the work of Dinur and Safra \cite{dinursafra} the following theorem can be derived.
\begin{theorem}[Dinur and Safra~\cite{dinursafra}]
\label{thm:VC}
Let $1/3 <p < p_{max}=(3-\sqrt{5})/2$ and $q=4p^3-3p^4$. For any constant $\epsilon > 0$, given any unweighted graph $G=(V,E)$ with bounded degrees, it is \np-hard to distinguish between ``Yes'': VC(G) $< (1-p+\epsilon)|V|$, and ``No'': VC(G) $> (1-q-\epsilon)|V|$.
\end{theorem}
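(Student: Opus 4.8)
The plan is to derive the stated bounded-degree gap from the Dinur--Safra theorem on the hardness of approximating minimum vertex cover, so the substance is to reconstruct why the optimal intersecting-family thresholds produce exactly the $(1-p)$ versus $(1-q)$ separation and then to force the instances to have bounded degree. I would begin from the \np-hardness of a gap version of \textsc{Label Cover}: by the PCP theorem together with Raz's parallel repetition, for every constant $\delta>0$ it is \np-hard to distinguish \textsc{Label Cover} instances that are fully satisfiable from those in which no labeling satisfies more than a $\delta$ fraction of the constraints. This gap is the raw material I will amplify into a vertex-cover gap.

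The core of the reduction is a long-code gadget weighted by the $p$-biased product measure $\mu_p$. For each variable of the \textsc{Label Cover} instance I would introduce a block of vertices indexed by the subsets of its label set, give the vertex indexed by $S$ the weight $\mu_p(S)$, and add edges so that (i) within a block, index sets that fail the relevant disjointness/covering condition are adjacent --- forcing any independent set restricted to a block to be an intersecting family --- and (ii) across two blocks joined by a constraint, index sets inconsistent under the constraint's projection are adjacent. Completeness is the easy direction: a satisfying labeling picks in each block the dictator family $\{S:i\in S\}$ for the chosen label $i$, which is intersecting, consistent across all constraints, and has $\mu_p$-measure exactly $p$; its complement is a vertex cover of fractional weight $1-p$, which is the ``Yes'' case.

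The main obstacle is soundness, and this is where the precise numbers enter. Arguing the contrapositive, suppose the graph has a (weighted) independent set of average per-block measure noticeably above $q$. The decisive ingredient is a $p$-biased stability theorem for intersecting families --- an Ahlswede--Khachatrian/Friedgut-style junta theorem --- stating that in the regime $1/3<p<p_{\max}=(3-\sqrt5)/2$ the unique near-extremal intersecting family is essentially a single dictator, so a per-block family whose measure exceeds the non-dictator acceptance value must concentrate on one influential coordinate. Decoding that coordinate in each such block and averaging over the constraints yields a labeling satisfying more than a $\delta$ fraction of them, contradicting soundness of the starting instance. The threshold $p_{\max}=(3-\sqrt5)/2$ is exactly the golden-ratio point below which the extremal family stays a dictatorship rather than a larger Ahlswede--Khachatrian family, and the cubic--quartic value $q=4p^3-3p^4=\Pr[\mathrm{Bin}(4,p)\ge 3]$ --- the chance that a strict majority of four independent $p$-biased bits are set --- is the acceptance probability of the soundness test on a generic non-dictator configuration. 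Checking that these two optima align to give a genuine $(1-p)$ versus $(1-q)$ gap is the most delicate part of the argument.

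It remains to meet the \emph{bounded-degree} hypothesis, since the weighted long-code graph is dense. I would clear the $\mu_p$ weights by a polynomial vertex blow-up and then impose bounded degree either by starting from a bounded-occurrence (regular, smooth) \textsc{Label Cover} instance so that each block interacts with only constantly many others, or by applying expander-replacement degree reduction to the final graph; in both routes one verifies that the fractional separation survives up to the additive $\epsilon$ slack in the statement. The completeness and the decoding/averaging in soundness are routine once the framework is set up; the genuine difficulty throughout is the $p$-biased intersecting-family stability theorem underlying the soundness step.
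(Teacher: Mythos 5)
First, a point of calibration: the paper does not prove Theorem~\ref{thm:VC} at all. It is imported as a black box from Dinur and Safra~\cite{dinursafra}, in the form observed in~\cite{AwasthiCKS15}, and everything the paper actually argues happens downstream of it (Lemma~\ref{lem:reduction} onward). So you are not reconstructing an argument the authors give; you are attempting to reprove a deep external theorem, and your sketch has a genuine gap at its crux. The stability statement you lean on --- that for $1/3<p<p_{\max}$ any intersecting family of $\mu_p$-measure noticeably above $q$ must essentially be a dictator --- is false. For every $p<1/2$ the extremal intersecting family is a dictator, of measure $p$, and there are intersecting families far from every dictator with measure strictly between $q$ and $p$: the majority-of-three family $\{S:|S\cap\{1,2,3\}|\ge 2\}$ is intersecting, has measure $3p^2-2p^3\approx 0.33$ at $p=p_{\max}$ (versus $q\approx 0.16$), and is at distance at least $p(1-p)\approx 0.24$ from every dictator family. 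Consequently a decoding argument built on dictatorship stability for plain intersecting families can only fire on blocks of measure at least $p-\epsilon$, and then the ``Yes''/``No'' separation collapses to nothing; it cannot push soundness down to $q$. The true role of $q=4p^3-3p^4=\Pr[\mathrm{Bin}(4,p)\ge 3]$ in Dinur--Safra is different: it is the maximum $\mu_p$-measure of a \emph{2-intersecting} family (any two members share at least two elements), attained by $\{S:|S\cap\{1,2,3,4\}|\ge 3\}$, and the hypothesis $p>1/3$ is exactly the point where this family overtakes the measure-$p^2$ family $\{S:\{1,2\}\subseteq S\}$ in the Ahlswede--Khachatrian/Frankl description of extremal $2$-intersecting families. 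Manufacturing $2$-intersecting structure out of an independent set is the hard part of their proof, and it is what your construction does not do.

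Relatedly, the graph you build --- one block per Label-Cover variable, disjointness edges inside blocks, projection-inconsistency edges across constraints --- is the standard biased long-code/FGLSS recipe, and it is precisely the recipe that does \emph{not} suffice over NP-hard Label Cover with many-to-one projections; this obstruction is why Dinur and Safra introduce their considerably more elaborate machinery (co-well-supported PCP assignments and multi-layered block structure), and why NP-hardness of vertex cover stood at factor $\approx 1.36$ rather than $2$. Finally, the bounded-degree step is not routine in the way you suggest: expander-replacement degree reduction is not gap-preserving for vertex cover/independent set, and de-weighting $\mu_p$ by duplicating vertices multiplies degrees, so your two proposed fixes cannot simply be composed. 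The workable route is your other option --- a constant-degree outer instance with constant-size label sets, so blocks have constant size and weights can be discretized inside each block --- but that needs to be argued rather than mentioned. In short, the sketch names the right ingredients, but the soundness mechanism, which you yourself identify as the heart of the matter, rests on a false statement, and the quantities $q$ and $p_{\max}$ enter the real proof for reasons your outline does not capture.
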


Given the vertex cover graph $G=(V,E)$, we construct an instance $I$ of \mds. In the instance $I$, all the \elf-frames lie on the left of the line $x=0$ and intersect the line, and thus each has the $\llcorner$ shape. The construction is as follows. For each vertex $v_i$, we have an \elf-frame $L_i$ that intersects $x=0$ at the point $(0,2i)$ as shown in Figure \ref{fig:apxepg}. Note that these \elf-frames do not intersect each other. For each edge $(v_i,v_j)$ with $i < j$, we have an \elf-frame $L_{ij}$ that intersects $x=0$ at $(0,2j)$, and thus intersects (i.e., has a common horizontal grid edge) with $v_j$. It also intersects (has a common vertical grid edge) with $v_i$ (see Figure \ref{fig:apxepg}). Also, for each vertex $v_i$, we have two auxiliary \elf-frames $A_{i1}$ and $A_{i2}$ that intersect $x=0$ at $(0,2n+i)$, and $A_{i1}$ shares a vertical grid edge with $L_i$. Note that the only \elf-frame $A_{i2}$ intersects is $A_{i1}$. Also, all the \elf-frames intersect the vertical line $x=0$. We first need the following observations.

\begin{figure}[t]
\centering
\begin{subfigure}{.25\textwidth}
  \centering
  \includegraphics[width=.75\linewidth]{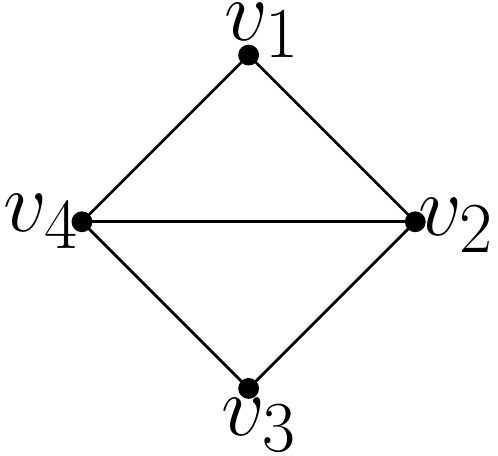}
\end{subfigure}
\hspace{15mm}
\begin{subfigure}{.35\textwidth}
  \centering
  \includegraphics[width=.60\linewidth]{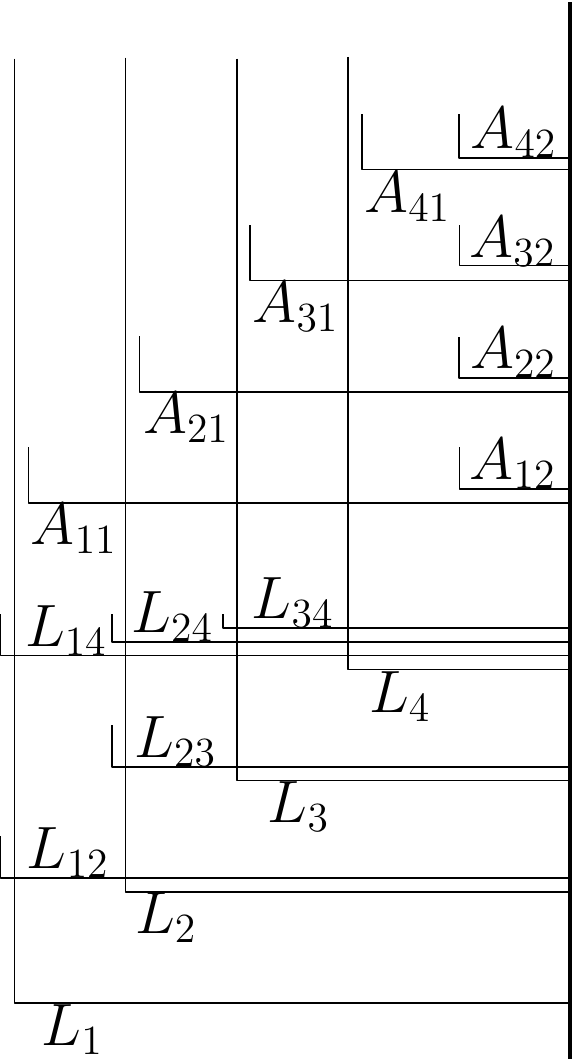}
\end{subfigure}
\caption{An example graph with 4 vertices (left) and its corresponding instance of the \mds~problem (right).}
\label{fig:apxepg}
\end{figure}

\begin{obs}
\label{obs:edge}
For $1\leq i<j\leq n$, the only \elf-frames that $L_{ij}$ intersects are $L_i, L_j$ and $L_{tj}$, where $(v_t,v_j) \in E$ with $t< j$. 
\end{obs}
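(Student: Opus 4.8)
The plan is to use the defining property of the edge intersection model: two \elf-frames are adjacent precisely when they overlap along a common grid edge, i.e.\ along a segment of positive (unit) length in the same orientation. Since a horizontal segment and a vertical segment can meet in at most a single point, no horizontal--vertical contact ever contributes a shared grid edge. Hence it suffices to examine, separately, (i) which frames have a horizontal segment overlapping the horizontal segment of $L_{ij}$, and (ii) which frames have a vertical segment overlapping the vertical segment of $L_{ij}$. I would carry out these two cases in turn.

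For the horizontal case, recall that $L_{ij}$ meets the line $x=0$ at $(0,2j)$, so its horizontal segment lies on the line $y=2j$ and terminates at $(0,2j)$. I would simply enumerate, from the construction, every frame whose horizontal segment lies at height $2j$: these are exactly $L_j$ (which meets $x=0$ at $(0,2j)$) and the edge frames $L_{tj}$ with $(v_t,v_j)\in E$ and $t<j$ (each of which also meets $x=0$ at $(0,2j)$). All of these contain the grid edge immediately to the left of $(0,2j)$, so each shares a horizontal grid edge with $L_{ij}$. Every remaining frame sits at a different height --- a vertex frame $L_k$ lies at $y=2k$ with $k\neq j$, an edge frame $L_{ab}$ lies at $y=2b$ with $b\neq j$, and the auxiliary frames lie at heights $2n+k\geq 2n+1>2j$ --- so none of them can overlap the horizontal segment of $L_{ij}$. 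This settles case (i) by an essentially bookkeeping argument.

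For the vertical case, I would use that, by construction, the (short) vertical segment of $L_{ij}$ lies on the same vertical line as the vertical segment of $L_i$ and is placed so as to overlap it, yielding the edge $\{L_{ij},L_i\}$. The heart of the argument --- and the step I expect to be the main obstacle --- is ruling out every \emph{other} vertical overlap. Here I would invoke the precise placement in the construction: all edge frames incident to $v_i$ as the smaller-indexed endpoint share the single vertical line through $L_i$'s vertical segment, but their vertical segments are staggered into pairwise disjoint height intervals, so $L_{ij}$ overlaps none of them; every edge frame $L_{js}$ (with $v_j$ the smaller index) and every edge frame $L_{ti}$ (with $v_i$ the larger index) has its vertical segment on a different vertical line --- through $L_j$, respectively $L_t$ --- and hence cannot overlap; and the auxiliary frames $A_{i1},A_{i2}$ are placed high enough (near $y=2n+i$) that $A_{i1}$'s vertical segment, although collinear with $L_i$'s, lies above the interval occupied by $L_{ij}$'s vertical segment. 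Combining the two cases shows that the neighbours of $L_{ij}$ are exactly $L_i$, $L_j$, and the frames $L_{tj}$ with $(v_t,v_j)\in E$ and $t<j$, which is the claim. The only delicate point is that this vertical argument is not purely combinatorial: it relies on the metric details (disjointness of the height intervals and distinctness of the vertical lines attached to distinct vertices), so I would make sure the construction explicitly guarantees these separations before appealing to them.
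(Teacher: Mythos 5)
Your proof is correct and takes essentially the same approach as the paper: the paper states this observation without any proof, treating it as immediate from the construction, and your two-part case analysis (horizontal overlaps determined by the height $2j$, vertical overlaps determined by the vertical line through $L_i$ together with disjoint height intervals) is precisely the verification that the construction is designed to make routine. Your closing caveat is well placed --- the metric details you rely on (staggered, pairwise disjoint vertical intervals for the edge frames $L_{ik}$, a distinct vertical line per vertex, and the auxiliary frames placed at heights $2n+i$) are specified in the paper only through its figure, so spelling them out as you do is exactly what a rigorous write-up of this observation requires.
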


\begin{obs}
\label{obs:vertex}
The only \elf-frames that get dominated by the \elf-frames in $\{A_{i1}:1\le i\le n\}$ are $L_i$ and $A_{i2}$ for $1\leq i\le n$.
\end{obs}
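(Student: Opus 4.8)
The plan is to recall that in the edge intersection model two \elf-frames are adjacent precisely when they share a grid edge, so proving the observation amounts to computing the closed neighbourhood of the set $\{A_{i1}:1\le i\le n\}$ in the constructed graph. It suffices to show that each individual $A_{i1}$ is adjacent to exactly $L_i$ and $A_{i2}$. The containment ``$\supseteq$'' is immediate from the construction: $A_{i1}$ is placed to share a vertical grid edge with $L_i$ and, since the only \elf-frame meeting $A_{i2}$ is $A_{i1}$, the two share a horizontal grid edge at their common attachment point $(0,2n+i)$; hence $L_i$ and $A_{i2}$ are indeed dominated. The real content is the reverse containment, namely that $A_{i1}$ shares no grid edge with any other \elf-frame, which I would establish by separately ruling out horizontal-edge and vertical-edge coincidences.

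For the horizontal segments I would use the height bookkeeping built into the construction. Every vertex \elf-frame $L_j$ and every edge \elf-frame $L_{jk}$ meets the line $x=0$ at a point of the form $(0,2\cdot)$, so its horizontal segment sits at an even height at most $2n$; the horizontal segment of $A_{i1}$, by contrast, sits at height $2n+i>2n$. Thus $A_{i1}$ can share no horizontal grid edge with any vertex or edge \elf-frame. Among the auxiliary \elf-frames, the only one at the same height $2n+i$ as $A_{i1}$ is $A_{i2}$, while every $A_{j1},A_{j2}$ with $j\neq i$ lies at the distinct height $2n+j$. Consequently the unique horizontal-edge neighbour of $A_{i1}$ is $A_{i2}$.

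The vertical segments are the crux, and I expect this to be the main obstacle. By construction $A_{i1}$'s vertical segment lies in the same column (same $x$-coordinate) as $L_i$'s vertical segment, and it is exactly this alignment that realises the intended adjacency $A_{i1}\sim L_i$. I would first argue that distinct vertex \elf-frames occupy distinct columns, so $A_{i1}$'s column is shared only with $L_i$ and with the edge \elf-frames $L_{ik}$ ($k>i$) attached to $L_i$; this already excludes any $L_j$, $A_{j1}$, $A_{j2}$ or $L_{jk}$ with $j\neq i$. The delicate point is then to show $A_{i1}$ does not accidentally share a vertical grid edge with one of the $L_{ik}$ living in $L_i$'s own column: here I would rely on the vertical ordering enforced by the construction, in which $A_{i1}$ runs alongside $L_i$ only on the topmost grid edges of that column (reflecting that $A_{i1}$ meets $x=0$ at height $2n+i>2n$), whereas each $L_{ik}$ meets and runs alongside $L_i$ strictly below this range (it meets $x=0$ at height $2k\le 2n$). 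Their occupied grid edges within the shared column are therefore disjoint, so $A_{i1}\not\sim L_{ik}$. Combining the three observations, $A_{i1}$ is adjacent to $L_i$ and $A_{i2}$ and to nothing else, and summing over $i$ shows that the \elf-frames dominated by $\{A_{i1}:1\le i\le n\}$ are exactly the $L_i$ and the $A_{i2}$, as claimed.
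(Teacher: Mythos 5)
Your proof is correct and takes essentially the same route as the paper: the paper states this observation without an explicit proof, treating it as immediate from the construction, and your verification (horizontal segments of vertex/edge \elf-frames sit at heights $\le 2n$ while $A_{i1}$'s sits at $2n+i$; the only vertical segments in $L_i$'s column belong to $L_i$, the $L_{ik}$'s, and $A_{i1}$; and the $L_{ik}$'s occupy grid edges strictly below $A_{i1}$'s range) is exactly the reading-off of the construction that the paper leaves implicit.
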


\begin{obs}
\label{obs:auxiliary}
Any dominating set contains at least one of $A_{i1}$ and $A_{i2}$ for each $1\leq i\leq n$. Also for any dominating set of size $k$, there is a dominating set $D$ of size at most $k$ such that $D$ contains $A_{i1}$ for all $1\leq i\leq n$ and does not contain any $A_{i2}$ for $1\leq i\leq n$.
\end{obs}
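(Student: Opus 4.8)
The plan is to establish the two assertions of the observation separately, treating the first as an immediate consequence of the construction and the second via a local exchange (replacement) argument. For the first assertion I would concentrate on the auxiliary \elf-frame $A_{i2}$. By construction the only \elf-frame intersecting $A_{i2}$ is $A_{i1}$, so in the intersection graph the closed neighbourhood of $A_{i2}$ is exactly $\{A_{i1},A_{i2}\}$. Consequently, in order to dominate $A_{i2}$, any dominating set must contain either $A_{i2}$ itself or its unique neighbour $A_{i1}$; this yields the first claim for every $1\le i\le n$.

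For the second assertion, the fact I would isolate first is that $A_{i1}$ dominates a superset of what $A_{i2}$ dominates. Indeed, the closed neighbourhood of $A_{i2}$ is $\{A_{i1},A_{i2}\}$, whereas by Observation~\ref{obs:vertex} the \elf-frame $A_{i1}$ dominates $L_i$ and $A_{i2}$ in addition to itself; hence $N[A_{i2}]\subseteq N[A_{i1}]$. Given a dominating set of size $k$, I would then modify it index by index: for each $i$, delete $A_{i2}$ from the set if it is present, and insert $A_{i1}$ if it is absent, leaving all other \elf-frames untouched. Call the resulting set $D$; by construction it contains every $A_{i1}$ and no $A_{i2}$. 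For the size bound I would invoke the first assertion, which guarantees that the original set already contained at least one of $A_{i1},A_{i2}$ for each $i$: the swap is size-neutral when exactly one of the two was present and strictly decreasing when both were present, so $|D|\le k$ follows from summing over $i$.

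The one point genuinely requiring verification — and thus the main ``obstacle,'' such as it is — is that $D$ remains a dominating set. Here I would use the neighbourhood containment above: any \elf-frame that was dominated in the original set by some $A_{i2}$ lies in $\{A_{i1},A_{i2}\}$, and both of these are dominated by $A_{i1}\in D$; no other \elf-frame ever relied on $A_{i2}$ for domination, while inserting $A_{i1}$ can only enlarge the dominated set. Hence feasibility is preserved under each replacement. I do not anticipate a serious difficulty, since the auxiliary \elf-frames are deliberately designed to be almost isolated (each $A_{i2}$ being a pendant neighbour of $A_{i1}$); the only subtlety is the routine bookkeeping of the three cases (the original set contains only $A_{i1}$, only $A_{i2}$, or both) needed to confirm the size bound in each.
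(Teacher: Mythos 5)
Your proof is correct and spells out exactly the reasoning the paper leaves implicit: the paper states this observation without proof, treating it as immediate from the construction (each $A_{i2}$ is a pendant neighbour of $A_{i1}$, and $N[A_{i2}]\subseteq N[A_{i1}]$, so swapping $A_{i2}$ for $A_{i1}$ is always safe and never increases the size). Your case analysis for the size bound and the feasibility check are precisely the intended argument.
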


\begin{lemma}
\label{lem:reduction}
$G$ has a vertex cover of size $k$ if and only if $I$ has a dominating set of size $k+n$. 
\end{lemma}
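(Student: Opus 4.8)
The plan is to prove the biconditional in Lemma~\ref{lem:reduction} by constructing, in each direction, an explicit dominating set (resp.\ vertex cover) and counting its size. The two auxiliary \elf-frames $A_{i1}, A_{i2}$ per vertex act as ``forcing gadgets'': by Observation~\ref{obs:auxiliary}, any dominating set must spend at least one \elf-frame per index $i$ to dominate the isolated-looking pair $\{A_{i1},A_{i2}\}$, and we may assume without loss of generality that it spends exactly $A_{i1}$. This accounts for the additive $+n$ in the statement and simultaneously forces the vertex \elf-frames $L_i$ to already be dominated (by Observation~\ref{obs:vertex}), so the ``real work'' left to the dominating set is exactly to dominate the edge \elf-frames $L_{ij}$.

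For the forward direction, I would start from a vertex cover $V'$ of size $k$ and build the dominating set $D := \{A_{i1} : 1\le i\le n\} \cup \{L_i : v_i\in V'\}$, of total size $n+k$. The variable and auxiliary \elf-frames are dominated by the $A_{i1}$'s and the $L_i$'s as above, so the only thing to verify is that every edge \elf-frame $L_{ij}$ is dominated. Since $V'$ is a vertex cover, at least one of $v_i,v_j$ lies in $V'$, hence at least one of $L_i, L_j$ is selected, and by Observation~\ref{obs:edge} (which records that $L_{ij}$ intersects exactly $L_i$, $L_j$, and certain $L_{tj}$'s) such an $L_i$ or $L_j$ dominates $L_{ij}$. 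This gives a dominating set of size $k+n$.

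For the reverse direction, I would take a dominating set of size $k+n$ and first normalize it via Observation~\ref{obs:auxiliary} so that it contains every $A_{i1}$ and no $A_{i2}$; the remaining $k$ \elf-frames lie among the $L_i$'s and $L_{ij}$'s. The key move is to argue that any edge \elf-frame $L_{ij}$ appearing in the solution can be swapped out for a vertex \elf-frame without increasing the size: since $L_{ij}$ must still be dominated and its only ``upward'' neighbours relevant to covering edges are the $L_i, L_j$ of its own endpoints, replacing a selected $L_{ij}$ by (say) $L_j$ preserves domination of $L_{ij}$ and of every other edge \elf-frame $L_{tj}$ that $L_{ij}$ was covering. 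After this normalization the $k$ non-auxiliary \elf-frames are all of the form $L_i$, and I set $V' := \{v_i : L_i \in D\}$. For each edge $(v_i,v_j)$ the \elf-frame $L_{ij}$ must be dominated by some selected \elf-frame, which by Observation~\ref{obs:edge} forces one of $L_i, L_j, L_{tj}$ into $D$; after the swap this is an $L_i$ or $L_j$, so $v_i$ or $v_j$ lies in $V'$, making $V'$ a vertex cover of size $k$.

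The main obstacle is the swapping argument in the reverse direction: I must be careful that replacing an edge \elf-frame $L_{ij}$ by a vertex \elf-frame does not lose domination of some other edge \elf-frame $L_{tj}$ that $L_{ij}$ had been covering. This is exactly where Observation~\ref{obs:edge} is essential, since it pins down precisely which edge \elf-frames share a grid edge with $L_{ij}$ (all of the form $L_{tj}$ with the same right endpoint $(0,2j)$), and replacing $L_{ij}$ by $L_j$ continues to dominate all of those. Verifying that this local exchange never increases the count and terminates with an all-$L_i$ solution is the one step that needs genuine care; everything else is a direct application of the three observations. Combined with Theorem~\ref{thm:VC}, the exact correspondence $VC(G)=k \iff$ \mds$(I)=k+n$ then transfers the inapproximability gap to the edge intersection model, yielding Theorem~\ref{thm:apxEPG}.
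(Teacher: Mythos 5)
Your proposal is correct and follows essentially the same route as the paper's proof: the forward direction builds the dominating set $\{A_{i1}\}\cup\{L_i : v_i\in V'\}$, and the reverse direction normalizes via Observation~\ref{obs:auxiliary} and then swaps each selected edge \elf-frame $L_{ij}$ for $L_j$, using Observation~\ref{obs:edge} to show no $L_{tj}$ loses its dominator. The only difference is presentational; the swapping argument, which you flag as the delicate step, is exactly the paper's argument as well.
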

\begin{proof}
Suppose $G$ has a vertex cover of size $k$. We construct a dominating set $D$ of size $k+n$ in the following way. At first add all the \elf-frames $\{A_{i1}:1\le i\le n\}$ to $D$. Then for each $v_i$ in the vertex cover, add $L_i$ to $D$. Clearly $|D|=k+n$. By Observation \ref{obs:vertex}, $D$ dominates any $L_i$ and $A_{i2}$ for $1\le i \le n$. Now consider any $L_{ij}$. Note that we have added at least one of $L_i$ and $L_j$ to $D$, and hence $D$ dominates $L_{ij}$ as well.

Now, suppose we have a dominating set $D'$ of size $k+n$. By Observation \ref{obs:auxiliary}, we can assume that $D'$ contains $A_{i1}$ for all $i$ and does not contain any $A_{i2}$. Thus $D'$ contains $k$ \elf-frames corresponding to the vertices and the edges. Now suppose $D'$ contains an edge \elf-frame $L_{ij}$. By Observation \ref{obs:edge} and \ref{obs:vertex}, the only \elf-frames that cannot get dominated by $D'$ by the removal of $L_{ij}$ from $D'$ are $L_{tj}$, where $(v_t,v_j) \in E$ with $t< j$. Thus if we replace $L_{ij}$ by $L_j$ in $D'$, $D'$ still remains a dominating set. Thus, we can assume w.l.o.g. that $D'$ does not contain any $L_{ij}$. Now we construct a subset $V'\subseteq V$ by selecting vertices corresponding to the vertex \elf-frames contained in $D'$. Clearly $|V'|=k$. We claim that $V'$ is a vertex cover. Consider any edge $(v_i,v_j)$. Then by Observation \ref{obs:vertex}, at least one of $L_i$ and $L_j$ must be in $D'$. It follows that at least one of $v_i$ and $v_j$ is present in $V'$ which finishes the proof. 
\end{proof}

By Theorem~\ref{thm:VC} and Lemma~\ref{lem:reduction}, we have the following lemma. 
\begin{lemma}
Let $1/3 <p < p_{max}=(3-\sqrt{5})/2$ and $q=4p^3-3p^4$. For any constant $\epsilon > 0$, given an input graph (in the edge intersection model) $H$ with $O(n)$ vertices, it is \np-hard to distinguish between ``Yes'': $H$ has a dominating set of size $< (2-p+\epsilon)n$, and ``No'': The size of any dominating set of $H$ is $> (2-q-\epsilon)n$.
\end{lemma}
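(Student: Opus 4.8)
The plan is to combine Theorem~\ref{thm:VC} with Lemma~\ref{lem:reduction} directly, exploiting the fact that the reduction producing the instance $I$ shifts the optimum by an \emph{exact} additive term of $n$. First I would set $H := I$, the edge-intersection \elf-frame instance built from the bounded-degree Vertex Cover graph $G=(V,E)$ with $|V|=n$. I would observe that $H$ is a legitimate instance for the claimed theorem: every \elf-frame has the $\llcorner$ shape and meets the line $x=0$ from one side, and the number of \elf-frames is $n + |E| + 2n = O(n)$, where the bound $|E| = O(n)$ uses precisely the bounded-degree hypothesis of Theorem~\ref{thm:VC}. The construction is clearly polynomial time.

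The key step is to upgrade the biconditional of Lemma~\ref{lem:reduction} into an exact equation for the optima. Since $G$ has a vertex cover of size $k$ if and only if $I$ has a dominating set of size $k+n$, taking $k=VC(G)$ shows that the minimum dominating set of $H$ has size at most $VC(G)+n$; conversely, the reverse direction of the lemma forbids any dominating set of smaller size. Hence the minimum dominating set of $H$ has size exactly $VC(G)+n$.

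With this equality in hand, I would simply translate the two Dinur--Safra thresholds of Theorem~\ref{thm:VC} by adding $n$. In the ``Yes'' case we have $VC(G) < (1-p+\epsilon)n$, so the minimum dominating set of $H$ has size $VC(G)+n < (2-p+\epsilon)n$; in the ``No'' case we have $VC(G) > (1-q-\epsilon)n$, so every dominating set of $H$ has size $VC(G)+n > (2-q-\epsilon)n$. Because distinguishing the two Vertex Cover cases is \np-hard and the map $G \mapsto H$ is a polynomial-time, gap-preserving reduction, distinguishing the two dominating-set cases for $H$ is \np-hard as well, which is exactly the statement.

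There is no genuine obstacle here beyond bookkeeping: the only two points requiring a moment's care are (i) confirming $|V(H)| = O(n)$ via the bounded-degree assumption, so that the thresholds phrased in terms of $n$ remain meaningful, and (ii) invoking \emph{both} directions of Lemma~\ref{lem:reduction}, so that the additive shift $+n$ applies simultaneously to both gap boundaries rather than to just one of them.
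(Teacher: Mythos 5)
Your proposal is correct and follows exactly the paper's (largely implicit) argument: the paper derives this lemma in one line by combining Theorem~\ref{thm:VC} with Lemma~\ref{lem:reduction}, which is precisely your translation of the Dinur--Safra thresholds by the exact additive shift $+n$. Your added care about using both directions of Lemma~\ref{lem:reduction} to pin down the optimum as exactly $VC(G)+n$, and about $|V(H)|=O(n)$ via bounded degree, just makes explicit what the paper leaves to the reader.
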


Therefore, the gap between the sizes of the minimum dominating set in the ``yes'' and the ``no'' cases approaches $(2-4p^3+3p^4)/(2-p)\approx 1.1377$ for $p=p_{max}$. Hence, Theorem~\ref{thm:apxEPG} follows.

\begin{figure}[t]
\centering
\includegraphics[width=0.55\textwidth]{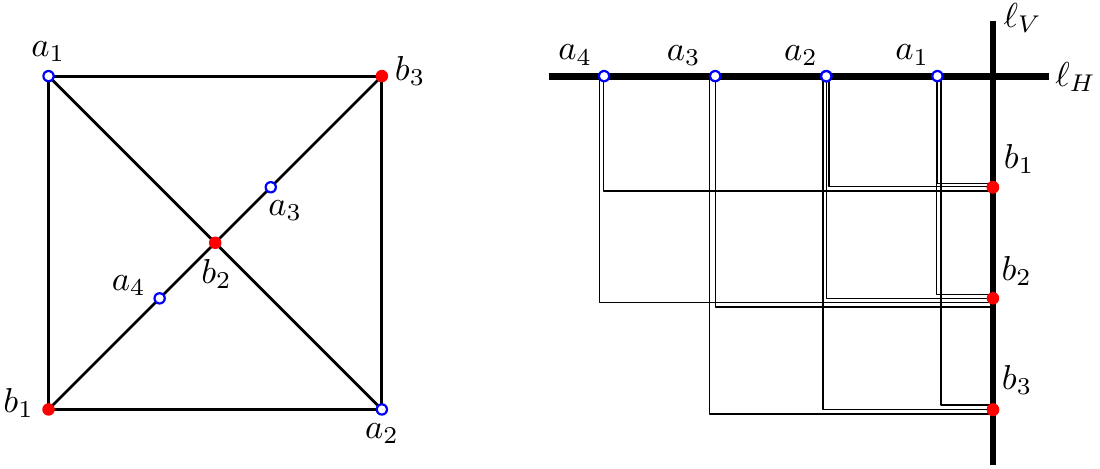}
\caption{A bipartite graph on 7 vertices (left) and its corresponding representation in the edge intersection model (right).}
\label{fig:hardness}
\end{figure}

\subparagraph{\np-hardness.} We show a reduction from the \emph{Edge Dominating Set} problem that is known to be \np-hard, even on planar bipartite graphs~\cite{HortonK93}. Recall that the objective of the Edge Dominating Set problem on a graph is to choose a minimum-cardinality set $S$ of edges of the graph such that every edge not in $S$ shares at least one endpoint with one edge in $S$. Given a planar bipartite graph $G=(A\cup B, E)$, we construct an intersection graph $G'$ of \elf-frames in polynomial time such that $G$ has an edge dominating set of size $k$ if and only if $G'$ has a dominating set of size $k$ (in the edge intersection model). To this end, let $A=\{a_1,\dots,a_r\}$ and $B=\{b_1,\dots,b_s\}$, and for each vertex $a_i\in A$ (resp., vertex $b_j\in B$), consider the point $(-i, 0)$ (resp., point $(0,-j)$) of the Cartesian coordinate system in the plane. Then, for each edge $(a_i,b_j)\in E(G)$, add to $V(G')$ the unique \elf-frame that connects $(-i,0)$ to $(0,-j)$ and has the point $(-i,-j)$ as its corner. See Figure~\ref{fig:hardness}. Let $G'$ be the resulting graph. Clearly, every \elf-frame in $G'$ intersects a vertical and a horizontal line and it can be constructed in polynomial time. Moreover, notice the one-to-one correspondence between the edges of $G$ and the \elf-frames in $G'$. It is now easy to see that $G$ has an edge dominating set of size $k$ if and only if $G'$ has a dominating set of size $k$ (in the edge intersection model).

\section{Conclusion}
\label{sec:conclusion}
In this paper, we considered the \mds~problem on the intersection graphs of rectangles and \elf-frames. Among several other approximation and hardness results, we gave a $(2+\epsilon)$-approximation algorithm for the problem on diagonal-anchored rectangles, which was based on a \ptas for when the rectangles are anchored from one side. However, the complexity of the latter problem remains open. The problem is \np-hard when the rectangles are anchored from both sides; is the problem \apx-hard? Or, does the problem admit a $\mathsf{PTAS}$? Even, finding an algorithm with approximation factor better than $(2+\epsilon)$ remains open.

\bibliographystyle{plain}
\bibliography{ref}

\end{document}